\newcommand{\Sum}{\sum\limits}
\newtheorem{dummytheorem}{Dummy-Theorem}[section]
\newtheorem{definition}[dummytheorem]{Definition}
\newtheorem{lemma}[dummytheorem]{Lemma}
\newtheorem{theorem}[dummytheorem]{Theorem}
\newtheorem{proposition}[dummytheorem]{Proposition}
\newtheorem{corollary}[dummytheorem]{Corollary}
\newtheorem{claim}[dummytheorem]{Claim}
\newcommand{\uint}{\mathbb{N}}
\newcommand{\sint}{\mathbb{Z}}
\newcommand{\real}{\mathbb{R}}
\newcommand{\card}[1]{\|#1\|}
\newcommand{\oli}[1]{\overline{#1}}
\newcommand{\half}[1][1]{\nicefrac{#1}{2}}
\newcommand{\eps}{\varepsilon}
\renewcommand{\card}[1]{\##1}
\newcommand{\pn}[1]{\textnormal{#1}}
\newcommand{\algosettings}{
\SetAlgoNoLine
\SetKwSty{texttt}
\SetFuncSty{texttt}
\SetArgSty{texttt}
\SetDataSty{texttt}
\SetAlFnt{\texttt}
\SetNlSty{texttt}{\normalsize}{}
\NoCaptionOfAlgo
\SetKwInOut{Input}{\textrm{Input}}
\SetKwInOut{Output}{\textrm{Output}}
}
\newcommand{\ATSP}{\pn{MaxATSP}}
\newcommand{\bfATSP}{\pn{\bf MaxATSP}}
\newcommand{\ATSPApprox}{\texttt{$k$-MaxATSP-Approx}$_\texttt{R}$}
\newcommand{\MM}{\pn{MM}}
\newcommand{\bfMM}{\pn{\bf MM}}
\newcommand{\kMMApprox}{\texttt{$k$-MM-Approx}$_\texttt{R}$}
\newcommand{\MaxSAT}{\mbox{\rm MaxSAT}}
\newcommand{\bfMaxSAT}{\mbox{\textbf{MaxSAT}}}
\newcommand{\kMaxSAT}[1][k]{\mbox{\rm $#1$-{MaxSAT}}}
\newcommand{\kMaxSATbf}[1][k]{\mbox{\bf $\boldsymbol{#1}$-{MaxSAT}}}
\newcommand{\kWSATApprox}{\mbox{\tt $2k$-MaxSAT-Approx}}
\begin{document}
\selectlanguage{english}

\title{Balanced Combinations of Solutions\\ in Multi-Objective Optimization}

\author{
Christian Glaßer \hspace*{6mm} 
Christian Reitwießner \hspace*{6mm} 
Maximilian Witek\\[5.7mm] 
{Julius-Maximilians-Universität Würzburg, Germany} \\
{\tt \{glasser,reitwiessner,witek\}@informatik.uni-wuerzburg.de}}

\date{}

\maketitle

\begin{abstract}
    \noindent For every list of integers $x_1, \ldots, x_m$ there is some $j$ such that $x_1 + \cdots + x_j - x_{j+1} - \cdots - x_m \approx 0$.
    So the list can be nearly balanced and for this we only need one alternation between addition and sub\-traction.
    But what if the $x_i$ are $k$-dimensional integer {\em vectors}?
    Using results from topological degree theory we show that balancing is still possible, now with $k$ alternations.

    This result is useful in multi-objective optimization,
    as it allows a polynomial-time computable balance of two alternatives with conflicting costs.
    The application to two multi-objective optimization problems yields the following results:
    \begin{itemize}
        \item 
        A randomized $\nicefrac{1}{2}$-approximation
        for multi-objective maximum asymmetric traveling salesman,
        which improves and simplifies the best known approximation for this problem.
        \item A deterministic $\nicefrac{1}{2}$-approximation for 
        multi-objective maximum weighted satisfiability.
    \end{itemize}

\end{abstract}


\section{Introduction} \label{sec_intro}

\paragraph{Balancing Sums of Vectors.}
Suppose we are given a sequence of goods $g_1, \ldots, g_m$,
each of which has a value, a weight, and a size.
Is it possible to distribute the goods on two trucks
such that the loads are nearly the same with respect to value, weight, and size?
We show that this is always possible by a very easy partition:
For suitable indices $i,j,k,l$,
assign $g_i, g_{i+1}, \ldots, g_{j}, \; g_{k}, g_{k+1}, \ldots, g_l$ to the first truck
and the remaining goods to the second one.
In general, if the goods have $2k$ criteria (value, weight, size, \ldots),
then there exist $k$ intervals of goods such that
the goods inside and the goods outside of the intervals
are nearly equivalent with respect to {\em all} criteria.

More formally, let $x_1, \ldots, x_m \in \uint^{2k}$ be {\em vectors of natural numbers} that represent the criteria of each good,
and let $z \in \uint^{2k}$ be an upper bound for these vectors (i.e., $x_i \le z$ for all $i$).
Lemma~\ref{lem:real:balancing} provides intervals $I_1, \ldots, I_k \subseteq \uint$
such that for $I = \bigcup_{i=1}^k I_i$,
\[
    -4kz \;\; \le \;\; \sum_{i \in I} x_i - \sum_{i \notin I} x_i \;\; \le \;\;
    4kz,
\]
where the $\le$ hold with respect to each component.
The same is true if $x_1,\ldots,x_m \in \sint^{2k}$ are {\em vectors of integers},
where $-z \le x_i \le z$ for all $i$ (Corollary~\ref{coro:combinatorial:integer}).
The proofs of these balancing results are based on the Odd Mapping Theorem,
a result from topological degree theory, which we apply in a discrete setting.
The discretization is responsible for the term $4kz$,
which is caused by a rounding error that unavoidably occurs at the boundaries of the intervals $I_1, \ldots, I_k$.

The simplicity of the desired partition (i.e., a union of $k$ intervals)
is important for the application of our balancing results.
Algorithmically, it means that for fixed dimension $2k$,
the right choice for the intervals $I_1, \ldots, I_k$
can be found by exhaustive search in time polynomial in $m$.

\paragraph{Multi-Objective Optimization.}
Many real-life optimization problems have multiple objectives that cannot be easily combined
into a single value.
Thus, one is interested in solutions that are good with respect to all objectives at the same time.
For conflicting objectives
we cannot hope for a single optimal solution, but
there will be trade-offs. The {\em Pareto set} captures the notion of optimality in this setting.
It consists of all solutions that are optimal in the sense that there is no
solution that is at least as good in all objectives and better in at least one
objective.
So the Pareto set contains all optimal decisions for a given situation.
For a general introduction to multi-objective optimization we refer to the survey by Ehrgott and Gandibleux \cite{eg00}
and the textbook by Ehrgott \cite{ehr05}.

For many problems, the Pareto set has exponential size and hence
cannot be computed in polynomial time.
Regarding the approximability of Pareto sets,
Papadimitriou and Yannakakis \cite{PY00} show that
every Pareto set has a $(1-\eps)$-approximation of size polynomial in the size of the instance and $\nicefrac{1}{\eps}$
(for the formal definition of approximation see section~\ref{sec:multi_defs}).
Hence, even though a Pareto set might be an exponentially large object,
there always exists a polynomial-size approximation.
This clears the way for a general investigation of the approximability
of Pareto sets of multi-objective optimization problems.

In general, inapproximability and hardness results
directly translate from single-objective optimization problems to their
multi-objective variants. On the other hand, existing approximation algorithms for
single-objective problems can not always be used for multi-objective approximation. 
Using our balancing results we demonstrate a translation of single-objective approximation ideas
to the multi-objective case:
We obtain
a randomized $\nicefrac{1}{2}$-approximation for multi-objective maximum asymmetric TSP and 
a deterministic $\nicefrac{1}{2}$-approximation for multi-objective maximum weighted satisfiability.

\paragraph{Traveling Salesman Problem.}
The (single-objective) maximum asymmetric traveling salesman problem (\bfATSP,
for short) is the optimization
problem where on input of a complete directed graph with edge weights from $\uint$
the goal is to find a Hamiltonian cycle of maximum weight.
Engebretsen and Karpinski \cite{EK01} show that \ATSP\ cannot be
$(319/320+\eps)$-approximated (unless P$=$NP).
In 1979, Fisher, Nemhauser and Wolsey \cite{FNW79} gave a $\nicefrac{1}{2}$-approximation
algorithm for \ATSP\ (remove the lightest edge
from each cycle of a maximum cycle cover and connect the remaining parts to a Hamiltonian cycle).
Since then, many improvements were achieved and the currently best known
approximation ratio of $\nicefrac{2}{3}$ for \ATSP\ is given by Kaplan et al.\
\cite{KLS+05}.

The $k$-objective variant {\boldmath$k${\bf-}}\bfATSP\ is defined analogously with edge weights
from $\uint^k$.  The hardness results for \ATSP\ directly translate
to its multi-objective variant (just set all but one component of the edge
weights to a constant), but
algorithms have to be newly designed. Bläser et al.\ \cite{BMP08} show that
$k$-\ATSP\ is randomized $(\frac{1}{k+1}-\eps)$-approximable.
This was improved by Manthey \cite{man09} to a randomized
$(\frac{1}{2}-\eps)$-approximation for all (fixed) numbers of criteria. Both algorithms
extend the cycle cover idea to multiple objectives.
With a surprisingly simple algorithm we improve the approximation ratio to $\nicefrac{1}{2}$.

\paragraph{Satisfiability.}
Given a formula in conjunctive normal form
and a non-negative weight in $\mathbb{N}$ for each clause,
the maximum weighted satisfiability problem (\bfMaxSAT, for short)
aims to find a truth assignment such that the sum of the weights of all satisfied clauses is maximal.
The first approximation algorithm for \MaxSAT\ is due to Johnson \cite{Joh74}.
He proved an approximation ratio of $\nicefrac{(2^r-1)}{(2^r)}$ 
for formulas where each clause has at least $r$ literals.
His work showed that the general \MaxSAT\ problem is $\nicefrac{1}{2}$-approximable.
Yannakakis \cite{Yan94} improved the approximation ratio of \MaxSAT\ to $\nicefrac{3}{4}$,
and Goemans and Williamson \cite{GW94} subsequently gave a simpler algorithm 
with essentially the same approximation ratio, and later
\cite{GW95} improved the approximation ratio to $0.758$.
Further improvements followed, and the currently best known approximation ratio of $0.7846$
is due to Asano and Williamson \cite{AW02}.
Regarding lower bounds, Papadimitriou and Yannakakis \cite{PY91} show that \MaxSAT\ is APX-complete.
Furthermore, by H{\aa}stad \cite{Has97}, \MaxSAT\ cannot be approximated better
than $\nicefrac{7}{8}$, unless P$=$NP.

Only little is known about the multi-objective maximum
weighted satisfiability problem (\kMaxSATbf, for short),
where each clause has a non-negative weight in $\mathbb{N}^k$ for some fixed $k
\geq 1$ and where we wish to maximize the weight of the satisfied clauses.
Santana et.\ al.\ \cite{SBLL09} apply genetic algorithms to a version of the problem that is
equivalent to \kMaxSAT\ with polynomially bounded weights.
To our knowledge, the approximability of \kMaxSAT\ has not been investigated so
far.

Using our balancing results, we can transfer a simple idea from single-objective
optimization to the multi-objective world:
For any truth assignment, the assignment itself or its complementary
assignment satisfies at least one half of all clauses.
We obtain a (deterministic) $\nicefrac{1}{2}$-approximation for $\kMaxSAT$,
independent of $k$.


\section{Balancing Results}
\label{combinatorics_section}

\subsection{Preliminaries}

Let $a,b \in \real$.
We call a function $f \colon [a,b] \to \real$ \emph{integrable}, if it is
Lebesgue-integrable on $[a,b]$. This is especially the case for bounded
functions $f$ with
only finitely many points of discontinuity.
A function $g \colon [a,b] \to \real^n$ is \emph{componentwise integrable},
if all projections $g_i$ are integrable and in this case
we write $\int_a^b g(x) \, dx$ as abbreviation for the tuple
$(\int_a^b g_1(x) \, dx, \ldots, \int_a^b g_n(x) \, dx)$.
For $x=(x_1,\dots,x_n)$, $y=(y_1,\dots,y_n) \in \real^{n}$ we write $x \le y$ 
if $x_i \le y_i$ for all $i\in\{1,2,\dots,n\}$.
For a set $A \subseteq \real^n$, $\oli{A}$ denotes the (topological) closure of $A$,
and $\partial A$ denotes the boundary of $A$.
The set $A\subseteq \real^n$ is \emph{symmetric} if
$x \in A \iff -x \in A$ for all $x \in \real^n$.

For bounded, open sets $D \subseteq \real^n$, continuous functions $\varphi
\colon D \to \real^n$ and points $p \in \real^n \setminus \varphi(\partial D)$
the integer $d(\varphi,D,p)$ is called the \emph{Brouwer degree} of $\varphi$
and $D$ at the point $p$.  We will not define it here, but we note that it
captures how often $p$ is ``covered'' by $\varphi(D)$, counting
``inverse'' covers negatively, and that it generalizes
the winding number in complex analysis.

\subsection{Analytical Version}

We apply the following theorems from topological degree theory to get the analytical version of our balancing results.
\begin{theorem}[\protect{\cite[Theorem 2.1.1]{ll78-degree-theory}}]\label{thm:oddmappingtwo}
If $D\subseteq\real^n$ is bounded and open,
$\varphi\colon \overline{D} \to \real^n$ is continuous,
$p \notin \varphi(\partial D)$ and
$d(\varphi,D,p)\neq 0$, then $p \in \varphi(D)$.
\end{theorem}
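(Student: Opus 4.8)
Since the excerpt treats the Brouwer degree purely as a black box, the natural route is to derive this statement from the standard properties of $d(\cdot,\cdot,\cdot)$ rather than to build it from scratch; it is precisely the \emph{existence} (or \emph{solution}) property of the degree. The plan is to prove the contrapositive: keeping the hypotheses that $D$ is bounded and open, $\varphi$ continuous, and $p\notin\varphi(\partial D)$, I would assume $p\notin\varphi(D)$ and show $d(\varphi,D,p)=0$.

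First I would observe that the assumptions combine to give $p\notin\varphi(\overline{D})$, since $\overline{D}=D\cup\partial D$ and hence $\varphi(\overline{D})=\varphi(D)\cup\varphi(\partial D)$. Because $\overline{D}$ is compact and $\varphi$ continuous, $\varphi(\overline{D})$ is compact, so $\delta:=\mathrm{dist}(p,\varphi(\overline{D}))>0$. Next I would unwind the construction of the degree for continuous maps, which proceeds by smooth approximation: choose a $C^1$ map $\psi\colon\overline{D}\to\real^n$ with $\sup_{\overline{D}}|\psi-\varphi|<\delta/2$ and, using Sard's theorem, a regular value $q$ of $\psi$ with $|q-p|<\delta/2$. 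Then $d(\varphi,D,p)=d(\psi,D,q)=\sum_{x\in\psi^{-1}(q)}\mathrm{sgn}(\det\psi'(x))$, and the preimage $\psi^{-1}(q)$ is empty, since $\psi(x)=q$ would force $|\varphi(x)-p|\le|\varphi(x)-\psi(x)|+|q-p|<\delta$, contradicting the choice of $\delta$. Hence the sum is empty and $d(\varphi,D,p)=0$.

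An even shorter route uses only the excision/additivity axiom: for every open $U$ with $\varphi^{-1}(p)\subseteq U\subseteq D$ one has $d(\varphi,D,p)=d(\varphi,U,p)$, and under the assumption $p\notin\varphi(D)$ one may take $U=\emptyset$, giving $d(\varphi,D,p)=d(\varphi,\emptyset,p)=0$. The only genuine obstacle here is foundational: the symbol $d(\varphi,D,p)$ has content only once a concrete construction (smooth approximation, or the axiomatic characterization) of the Brouwer degree has been fixed, and setting up that machinery is the real work; once it is available, the theorem is immediate. As the paper uses the degree only through its stated properties, it simply cites \cite{ll78-degree-theory}.
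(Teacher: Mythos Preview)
Your analysis is correct, and in fact you have anticipated the paper's treatment exactly: the paper gives no proof of this theorem at all, but simply quotes it as \cite[Theorem~2.1.1]{ll78-degree-theory} and uses it as a black box. Both of your sketched arguments (smooth approximation plus Sard, or the excision/additivity axiom with $U=\emptyset$) are standard and valid derivations of the existence property once the degree is in hand, so you have supplied strictly more than the paper does.
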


\begin{theorem}[Odd Mapping Theorem, \protect{\cite[Theorem
3.2.6]{ll78-degree-theory}}]\label{thm:oddmapping}
Let $D$ be a bounded, open, symmetric subset of $\real^n$ containing the
origin. If $\varphi\colon \overline{D}\to\real^n$ is continuous,
$0 \notin \varphi(\partial D)$, and for all $x \in \partial D$
it holds that
$
\frac{\varphi(x)}{|\varphi(x)|} \neq
\frac{\varphi(-x)}{|\varphi(-x)|},
$
then $d(\varphi,D,0)$ is an odd number (and in particular not zero).
\end{theorem}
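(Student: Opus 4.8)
The plan is to reduce the statement to the classical Borsuk theorem for \emph{odd} maps and then prove the latter by a signed count of preimages of the origin.

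\emph{Step 1: reduction to an odd map.} For $x\in\overline{D}$ and $t\in[0,1]$ set $h(x,t)=\varphi(x)-t\,\varphi(-x)$; this is well defined because $D$ is symmetric, it is continuous, $h(\cdot,0)=\varphi$, and $\psi:=h(\cdot,1)$ is odd on all of $\overline{D}$ since $\psi(-x)=\varphi(-x)-\varphi(x)=-\psi(x)$. To invoke homotopy invariance of the Brouwer degree I must check $0\notin h(\partial D,t)$ for every $t$. If $h(x,t)=0$ with $x\in\partial D$, then $\varphi(x)=t\,\varphi(-x)$; the case $t=0$ forces $\varphi(x)=0\in\varphi(\partial D)$, which is excluded, and for $t>0$ neither $\varphi(x)$ nor $\varphi(-x)$ can vanish (note $x,-x\in\partial D$), so $\varphi(x)/|\varphi(x)|=\varphi(-x)/|\varphi(-x)|$, contradicting the hypothesis. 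Hence $d(\varphi,D,0)=d(\psi,D,0)$, and it suffices to prove that the degree of an odd map is odd.

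\emph{Step 2: smooth odd approximation with $0$ a regular value.} Extend $\psi$ continuously past $\overline{D}$ (Tietze), replace the extension by its odd part (which still equals $\psi$ on $\overline{D}$), and mollify with an even mollifier to obtain a $C^\infty$ odd map $\psi_1$ with $\|\psi_1-\psi\|_\infty$ as small as desired; by stability of the degree under small perturbations, $d(\psi_1,D,0)=d(\psi,D,0)$ and $0\notin\psi_1(\partial D)$ once $\psi_1$ is close enough. I then perturb $\psi_1$, preserving oddness, so that $0$ becomes a regular value of the restriction to $D\setminus\{0\}$ and $D\psi_1(0)$ is nonsingular. One works coordinate slab by coordinate slab: on $\{x_j\ne 0\}$ the map $x\mapsto\psi_1(x)/x_j$ is smooth, so by Sard's theorem a generic small vector $a^j$ makes $x\mapsto\psi_1(x)-a^jx_j$ have $0$ as a regular value there; multiplying the correction term $a^jx_j$ by an even bump function one arranges not to spoil the regularity already achieved on the earlier slabs, and since each term $a^jx_j$ is odd, oddness persists. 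Since an odd map vanishes at the origin, one last perturbation $x\mapsto\varepsilon\,\beta(x)\,Lx$, with $\beta$ an even bump equal to $1$ near $0$ and $L$ a generic matrix, makes $D\psi_1(0)$ invertible (hence also isolates the zero at $0$) without affecting anything else. Call the resulting $C^1$ odd map $g$; then $d(g,D,0)=d(\psi,D,0)$ and $g^{-1}(0)\cap D=\{0\}\cup\{\pm y_1,\dots,\pm y_m\}$ is finite, with $Dg$ nonsingular at every listed point.

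\emph{Step 3: parity.} For such a $g$ one has $d(g,D,0)=\sum_{y\in g^{-1}(0)\cap D}\operatorname{sgn}\det Dg(y)$. Differentiating $g(-x)=-g(x)$ gives $Dg(-x)=Dg(x)$, so each pair $\{y_k,-y_k\}$ contributes $2\operatorname{sgn}\det Dg(y_k)\in\{-2,2\}$, while the origin contributes $\operatorname{sgn}\det Dg(0)=\pm1$. Therefore $d(g,D,0)$ is odd, and hence so is $d(\varphi,D,0)$, which in particular is nonzero. The main obstacle is Step 2: making $0$ a regular value \emph{while keeping the map odd}. A generic perturbation destroys the symmetry, and the symmetry forces the origin to be a zero, so the origin must be treated separately; the slab-by-slab induction, the repeated applications of Sard's theorem, and the bump-function bookkeeping needed so that no stage undoes an earlier one are where all the care goes. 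Steps 1 and 3 are routine uses of homotopy invariance, stability, and the preimage-count formula for the Brouwer degree.
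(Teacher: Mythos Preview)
The paper does not prove this theorem at all: it is quoted verbatim from Lloyd's \emph{Degree Theory} (Theorem~3.2.6 there) and used as a black box, together with Theorem~\ref{thm:oddmappingtwo}, to derive Corollary~\ref{coro:oddmapping}. So there is nothing to compare against in the paper itself.

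That said, your sketch is the standard route to Borsuk's odd mapping theorem and is essentially correct. Step~1 (the homotopy $h(x,t)=\varphi(x)-t\varphi(-x)$ to an odd map) and Step~3 (the parity count via $Dg(-x)=Dg(x)$) are clean and exactly how the argument goes. Step~2 is where the real work hides, as you acknowledge: one must perturb to a smooth map with $0$ regular \emph{while preserving oddness}, and the origin is forced to be a zero. Your slab-by-slab Sard argument is the right idea and is how Lloyd (and Deimling, and others) handle it, though in a fully rigorous write-up one has to be a bit more careful that the bump-function localizations really do not destroy regularity achieved earlier, and that the final linear perturbation at the origin is small enough not to create new zeros nearby. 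None of this is a gap in the approach; it is just the expected bookkeeping.
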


\begin{corollary}\label{coro:oddmapping}
Let $D$ be a bounded, open, symmetric subset of $\real^n$ containing the
origin. If $\varphi\colon \overline{D} \to \real^n$ is continuous
and for all $x \in \partial D$ it holds that $\varphi(-x)=-\varphi(x)$, then $0
\in \varphi(\overline{D})$.
\end{corollary}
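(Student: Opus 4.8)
The plan is to derive Corollary~\ref{coro:oddmapping} from the Odd Mapping Theorem (Theorem~\ref{thm:oddmapping}) together with Theorem~\ref{thm:oddmappingtwo}, handling a boundary-value subtlety by a case distinction. The hypothesis $\varphi(-x) = -\varphi(x)$ for all $x \in \partial D$ says that $\varphi$ is odd on the boundary, which is a much stronger condition than the one required by Theorem~\ref{thm:oddmapping}, so morally the result should follow immediately. The one thing that can go wrong is that Theorem~\ref{thm:oddmapping} (and Theorem~\ref{thm:oddmappingtwo}) require $0 \notin \varphi(\partial D)$, whereas here we are not given that, and in fact if $0 \in \varphi(\partial D)$ then trivially $0 \in \varphi(\overline{D})$ since $\partial D \subseteq \overline{D}$.

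So the proof structure is as follows. \emph{Case 1:} $0 \in \varphi(\partial D)$. Then $0 \in \varphi(\overline{D})$ immediately because $\partial D = \overline{D} \setminus D \subseteq \overline{D}$, and we are done. \emph{Case 2:} $0 \notin \varphi(\partial D)$. Then for every $x \in \partial D$, both $\varphi(x)$ and $\varphi(-x) = -\varphi(x)$ are nonzero, so the normalized vectors $\varphi(x)/|\varphi(x)|$ and $\varphi(-x)/|\varphi(-x)| = -\varphi(x)/|\varphi(x)|$ are defined and are negatives of each other, hence distinct (a unit vector is never equal to its own negation). Thus the hypotheses of Theorem~\ref{thm:oddmapping} are satisfied, and we conclude $d(\varphi, D, 0)$ is odd, in particular $d(\varphi, D, 0) \neq 0$. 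Now apply Theorem~\ref{thm:oddmappingtwo} with $p = 0$: since $0 \notin \varphi(\partial D)$ and $d(\varphi, D, 0) \neq 0$, we get $0 \in \varphi(D) \subseteq \varphi(\overline{D})$.

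I do not anticipate a genuine obstacle here; the only thing to be careful about is not to overlook the $0 \in \varphi(\partial D)$ case, since the cited theorems silently assume it away, and to note that $\overline{D}$ in the statement of Corollary~\ref{coro:oddmapping} is the same object as the $\overline{D}$ appearing in both theorems (the domain of $\varphi$), so no extension or restriction of $\varphi$ is needed. One could even state the two cases more uniformly by observing that in either case the conclusion lands in $\varphi(\overline{D})$, which is exactly what is claimed (as opposed to the sharper $\varphi(D)$, which only holds in Case 2).
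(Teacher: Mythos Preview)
Your proof is correct and essentially the same as the paper's. The paper phrases it as a proof by contradiction (assume $0 \notin \varphi(\overline{D})$, hence in particular $0 \notin \varphi(\partial D)$, then apply Theorems~\ref{thm:oddmapping} and~\ref{thm:oddmappingtwo} to reach a contradiction), whereas you unfold this into an explicit two-case argument; the logical content is identical.
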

\begin{proof}
Assume that $0 \notin \varphi(\overline{D})$. From $\varphi(-x)=-\varphi(x)$
for $x \in \partial D$ it follows that the inequality condition
of Theorem~\ref{thm:oddmapping} is fulfilled (note that $0 \notin \varphi(\partial
D)$) and thus
$d(\varphi,D,0) \neq 0$ and by Theorem~\ref{thm:oddmappingtwo}, $0 \in \varphi(\overline{D})$.
This is a contradiction.
\end{proof}

\begin{lemma}\label{lem:single_annulator}
Let $n \ge 1$, $a,b \in \real$, and
$h \colon [a,b] \to \real^{2n}$ be componentwise integrable.
There exist $n$ closed intervals $I_1, \ldots, I_n \subseteq [a,b]$
such that for $I=I_1\cup \dots \cup I_n$,
\[
    \int\limits_I h(x) \,dx = \int\limits_{[a,b]\setminus I} h(x)\,dx.
\]
\end{lemma}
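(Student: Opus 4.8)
The plan is to reformulate the statement as a fixed-point / zero-finding problem in $\real^n$ and apply Corollary~\ref{coro:oddmapping}. The key idea is that we have $2n$ scalar constraints (one per component of $h$), but we are allowed $n$ free intervals, each described by two endpoints, giving $2n$ real degrees of freedom — so the count matches. I would parametrize the $n$ intervals by their $2n$ endpoints and define a function $\varphi$ whose zero set corresponds exactly to the desired balance condition $\int_I h = \int_{[a,b]\setminus I} h$, i.e. $2\int_I h = \int_a^b h$.

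First I would set up coordinates. Write $H(t) = \int_a^t h(x)\,dx \in \real^{2n}$ for the (continuous, componentwise) antiderivative. For a parameter vector $y = (y_1,\dots,y_n) \in \real^n$ — to be chosen in a symmetric domain like $D = (-c,c)^n$ for suitable $c$ — I want to encode an interval $I_j$ whose contribution to $\int_I h$ is a continuous, odd-in-$y_j$ function of $y_j$. A clean choice: let the $j$-th interval have center depending on $y_j$, for instance take $I_j = I_j(y_j)$ so that $\int_{I_j(y_j)} h(x)\,dx$ equals $H(\text{something}) - H(\text{something})$ in a way that is continuous in $y_j$ and flips sign when $y_j \mapsto -y_j$ relative to the "half" baseline. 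Concretely one can center everything at the midpoint $m=(a+b)/2$: for $y_j \ge 0$ let $I_j$ run over $[m, m + y_j\cdot(\text{scaling})]$ and for $y_j < 0$ let it run over $[m + y_j\cdot(\text{scaling}), m]$; summing the signed contributions over $j$ and subtracting $\tfrac12\int_a^b h$ gives a candidate $\varphi\colon \overline D \to \real^{2n}$. The subtlety is that $\varphi$ maps into $\real^{2n}$, not $\real^n$, so I need to reduce: pair up the $2n$ components of $h$ as $n$ pairs and build $\varphi\colon\overline D\to\real^n$ by a further device, or — cleaner — apply the odd-mapping corollary with $n$ intervals handling $2n$ constraints by letting interval $j$ sweep a two-dimensional region. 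I would instead use the standard trick: let each $I_j$ have \emph{both} endpoints vary, parametrized so the pair of endpoints of $I_j$ corresponds to coordinates that enter two of the $2n$ components, and arrange the antipodal symmetry by pairing interval $I_j$ at parameter $x$ with its complement-type reflection at $-x$.

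The crucial step — and the one I expect to be the main obstacle — is arranging the boundary antipodality condition $\varphi(-y) = -\varphi(y)$ for all $y \in \partial D$. The natural combinatorial fact driving this is: swapping $I$ with its complement $[a,b]\setminus I$ negates $\int_I h - \int_{[a,b]\setminus I} h$. So I want the parametrization to have the property that the point $-y$ on the boundary encodes the interval configuration that is (up to measure zero) complementary to the one encoded by $y$. Achieving this cleanly while keeping $\varphi$ continuous on all of $\overline D$ (including the seams where intervals degenerate to points or where an endpoint hits $a$ or $b$) is the delicate part; degenerate and overlapping intervals must be allowed and must not break continuity. Once $\varphi$ is continuous on $\overline D$ and satisfies $\varphi(-y)=-\varphi(y)$ on $\partial D$, Corollary~\ref{coro:oddmapping} yields $0\in\varphi(\overline D)$, and any preimage $y^\ast$ gives intervals $I_1,\dots,I_n$ with $\int_I h = \int_{[a,b]\setminus I}h$ as required. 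I would finish by checking that the intervals produced are genuinely closed subintervals of $[a,b]$ (clamping endpoints to $[a,b]$ if the parametrization overshoots, which preserves both continuity and the antipodal relation since overshoot regions contribute zero).
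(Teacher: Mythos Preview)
Your high-level strategy is right: apply Corollary~\ref{coro:oddmapping} to a map whose zero corresponds to the balance $\int_I h = \int_{[a,b]\setminus I} h$, arranging that on the boundary the antipodal point encodes the complementary set so that $\varphi(-y)=-\varphi(y)$. But the proposal never produces a parametrization that actually has this property, and the attempts you sketch cannot be made to work as stated.

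The first concrete problem is the dimension. Corollary~\ref{coro:oddmapping} needs domain and target of the \emph{same} dimension; since the target is $\real^{2n}$, the domain must live in $\real^{2n}$, not $\real^n$. Your ``pair up the $2n$ components into $n$'' idea does not help: you cannot collapse $2n$ independent scalar equations into $n$ without losing information. The second and more serious problem is that parametrizing $n$ intervals by their $2n$ endpoints gives no natural antipodal involution that sends a configuration to its complement: the complement in $[a,b]$ of a union of $n$ intervals is in general a union of $n{+}1$ intervals, so it simply does not lie in the same parameter space. Your phrase ``complement-type reflection'' for a single interval has no well-defined meaning for this reason.

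The paper resolves both issues with one construction. Work on $[0,1]$ and take the domain to be the $\ell^1$-ball $T=\{t\in\real^{2n}\mid \sum_i |t_i|\le 1\}$. A point $t$ determines a partition of an initial segment of $[0,1]$ into $2n$ consecutive subintervals of lengths $|t_1|,\ldots,|t_{2n}|$; let $I_t$ be the union of those subintervals whose corresponding coordinate $t_k$ is positive, and set $f(t)=\int_{I_t}h-\int_{[0,1]\setminus I_t}h$. Continuity is immediate. On the boundary $\partial T$ the lengths sum to $1$, so the $2n$ subintervals cover all of $[0,1]$; replacing $t$ by $-t$ flips every sign and hence swaps $I_t$ with its complement up to finitely many points, giving $f(-t)=-f(t)$. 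Finally, although $I_t$ is formally a union of up to $2n$ subintervals, consecutive subintervals with the same sign merge, and an alternating sign pattern of length $2n$ has at most $n$ maximal positive runs, so $I_t$ is always a union of at most $n$ closed intervals. This last counting observation is the piece that lets a $2n$-dimensional parametrization still output only $n$ intervals, and it is exactly what your sketch is missing.
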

\begin{proof}
Observe that it suffices to show this for $[a,b]=[0,1]$.
Define $T = \{(t_1,t_2,\dots,t_{2n})\in \real^{2n} \mid \sum_{i=1}^{2n} |t_i| \le 1\}$
and for every $t=(t_1,\dots,t_{2n}) \in T$, let
\[
I_t = \bigcup_{\substack{1 \le k \le 2n\\t_k > 0}} \left[\sum_{i=1}^{k-1}|t_i|,
\sum_{i=1}^{k}|t_i|\right]
\]
and
\[
f \colon T \to \real^{2n},\quad
f(t) = \int\limits_{I_t}  h(x)\,dx - \int\limits_{[0,1]\setminus I_t} h(x)\,dx.
\]

\begin{figure}
\centering
\begin{tikzpicture}
\draw[|->] (0,0) to (9,0);
\node(positive) at (0,.6){};
\node(zero) at (0,0){};
\node(negative) at (0,-.6){};
\node[left of=zero]{$I_t$};
\node[left of=positive]{$t_i > 0$};
\node[left of=negative]{$t_i \leq 0$};
\draw[|-|] (0,.6) to node [above] {$|t_1|$} (1.0,.6);
\draw[|-|] (1.0,.6) to node [above] {$|t_2|$} (1.8,.6);
\draw[|-|] (1.8,-.6) to node [below] {$|t_3|$} (3.1,-.6);
\draw[|-|] (3.1,.6) to node [above] {$|t_4|$} (4.2,.6);
\draw[|-|] (4.2,-.6) to node [below] {$|t_5|$} (4.9,-.6);
\draw[|-|] (4.9,-.6) to node [below] {$|t_6|$} (6.1,-.6);
\draw[|-|] (6.1,-.6) to node [below] {$|t_7|$} (7.1,-.6);
\draw[|-|] (7.1,.6) to node [above] {$|t_8|$} (7.9,.6);
\draw[|-|,very thick] (0,0) to (1.8,0);
\draw[|-|,very thick] (3.1,0) to (4.2,0);
\draw[|-|,very thick] (7.1,0) to (7.9,0);
\end{tikzpicture}
\caption{Illustration of the set $I_t$ for some value of $t=(t_1,\dots,t_8)$,
where $t_1$, $t_2$, $t_4$ and $t_8$ are positive and $t_3$, $t_5$, $t_6$ and
$t_7$ are negative.}
\end{figure}
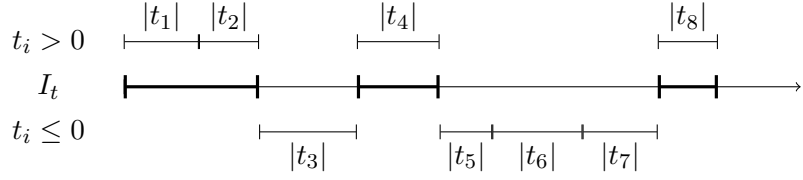

By the formal definition, $I_t$ is a union of (at most) $2n$ closed intervals.
However, it can always be written as a union of at most $n$ closed intervals,
by merging adjacent intervals.

We now want to show that $0 \in f(T)$ by applying
Corollary~\ref{coro:oddmapping} to $\varphi = f$ and
$D$ being the interior of $T$. $D$ is obviously a bounded, open, and symmetric
subset of $\real^{2n}$ containing the origin. The function $f$ is continuous because of the
fundamental theorem of calculus for the Lebesgue integral and the fact that
the endpoints of the intervals in $I_t$ depend continuously on $t$.
Furthermore, for any $t \in \partial D$ we get that there are only finitely many
points in $[0,1]$ which are not in exactly one of the sets $I_{-t}$ and
$I_t$
and thus $f(-t) = -f(t)$ since these finitely many points have no influence on the values
of the integrals. Since all preconditions of the corollary are fulfilled, we get
$0\in f(T)$ and thus there exists some
$t \in T$ such that
\[
\int\limits_{I_t} h(x)\,dx =
\int\limits_{[0,1]\setminus I_t} h(x)\,dx.
\]
As already noted, $I_t$ can be written as a union of at most $n$ closed intervals.
We obtain a union of exactly $n$ intervals by adding intervals $[a,a]$.
\end{proof}

\begin{lemma}\label{lem:double_bisector}
Let $n \ge 1$, $a,b \in \real$, and
$f,g \colon [a,b] \to \real^{2n}$ be componentwise integrable.
There exist $n$ closed intervals $I_1,\ldots,I_n \subseteq [a,b]$
such that for $I=I_1\cup \dots \cup I_n$,
\[
    \int\limits_{I} f(x) \,dx +
    \int\limits_{[a,b]\setminus I} g(x) \,dx = 
    \frac{1}{2}\int\limits_{[a,b]} f(x) + g(x) \,dx.
\]
\end{lemma}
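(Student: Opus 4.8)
The plan is to reduce this statement directly to Lemma~\ref{lem:single_annulator} by applying it to the difference function $h = f - g$. Since $f$ and $g$ are componentwise integrable on $[a,b]$, so is $h$, so Lemma~\ref{lem:single_annulator} gives $n$ closed intervals $I_1,\ldots,I_n \subseteq [a,b]$ such that, writing $I = I_1 \cup \dots \cup I_n$, we have $\int_I h(x)\,dx = \int_{[a,b]\setminus I} h(x)\,dx$. The remaining work is purely algebraic bookkeeping with integrals over $I$ and over its complement.

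Concretely, the next step is to rewrite that identity by splitting the integral over $[a,b]$: from $\int_{[a,b]\setminus I} h = \int_{[a,b]} h - \int_I h$ and $\int_I h = \int_{[a,b]\setminus I} h$ we get $2\int_I h = \int_{[a,b]} h$, i.e. $\int_I (f-g) = \tfrac12 \int_{[a,b]} (f - g)$, where all equalities hold componentwise.

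Finally I would assemble the target expression: $\int_I f + \int_{[a,b]\setminus I} g = \int_I f + \int_{[a,b]} g - \int_I g = \int_I (f - g) + \int_{[a,b]} g = \tfrac12 \int_{[a,b]}(f - g) + \int_{[a,b]} g = \tfrac12 \int_{[a,b]}(f + g)$, which is exactly the claimed equation. Since Lemma~\ref{lem:single_annulator} already produces exactly $n$ closed intervals, no padding with degenerate intervals $[a,a]$ is even needed here.

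The only real "obstacle" is spotting the right substitution $h = f - g$; once that is made, the proof is a one-line manipulation and needs no further input from degree theory. A minor point worth stating explicitly is that componentwise integrability is preserved under taking the difference $f - g$, so that Lemma~\ref{lem:single_annulator} genuinely applies.
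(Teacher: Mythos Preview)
Your proposal is correct and follows essentially the same approach as the paper: apply Lemma~\ref{lem:single_annulator} to $h=f-g$ and then do the straightforward algebraic rearrangement. The only cosmetic difference is that the paper reaches the final identity by adding $\int_{[a,b]}(f+g)$ to both sides of $\int_I(f-g)+\int_{[a,b]\setminus I}(g-f)=0$, whereas you first pass through $\int_I(f-g)=\tfrac12\int_{[a,b]}(f-g)$; both routes are equivalent.
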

\begin{proof}
Applying Lemma~\ref{lem:single_annulator} to $h(x) = f(x) - g(x)$ yields some
$I\subseteq [a,b]$ that is the union of $n$ closed intervals in $[a,b]$ such that
\begin{align*}
\int\limits_{I} h(x) \,dx &=
\int\limits_{[a,b]\setminus I} h(x) \,dx\\
\iff \int\limits_{I} f(x)-g(x) \,dx &=
\int\limits_{[a,b]\setminus I} f(x)-g(x) \,dx\\
\iff \int\limits_{I} f(x)-g(x) \,dx +
\int\limits_{[a,b]\setminus I} g(x)-f(x) \,dx &= 0\\
\stackrel{(*)}{\iff} 2\int\limits_{I} f(x) \,dx +
2\int\limits_{[a,b]\setminus I} g(x) \,dx &= 
\int\limits_{[a,b]} f(x) + g(x) \,dx\\
\iff \int\limits_{I} f(x) \,dx +
\int\limits_{[a,b]\setminus I} g(x) \,dx &= 
\frac{1}{2}\int\limits_{[a,b]} f(x) + g(x) \,dx.
\end{align*}
Note that $(*)$ is obtained by adding $\int_{[a,b]}f(x)+g(x)\,dx$ to both sides.
\end{proof}

\subsection{Discretization of the Analytical Results}

Now we discretize the analytical results which
causes a rounding error that cannot be avoided.

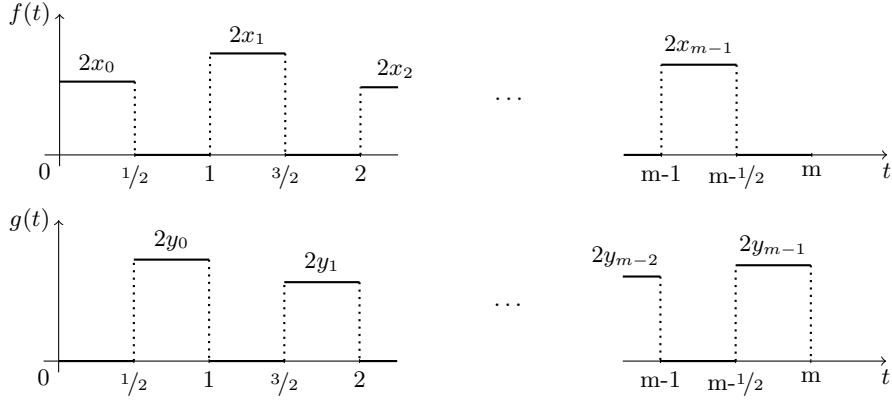
\begin{figure}
    \centering
    \footnotesize
    \begin{tikzpicture}[yscale=0.75]
    \draw (-0.2,0) node[below]{$0$} -- (4.5,0);
    \draw (6,1) node{$\dots$};
    \draw[->] (7.5,0) -- (11,0) node[below]{$t$};
    \foreach \x/\xtext in 
     {1/$\nicefrac{1}{2}$,
     2/$1$,
     3/$\nicefrac{3}{2}$,
     4/$2$,
     8/$m-1$,
     9/$m-\nicefrac{1}{2}$,
     10/$m$}
     \draw (\x cm,1pt) -- (\x cm,-1pt) node[below] {$\xtext$};
    \draw[->] (0,-0.2) -- (0,2.5) node[left]{$f(t)$};
    \draw[thick] (0,1.3) -- node[above]{$2x_0$} (1,1.3);
    \draw[thick,dotted] (1,1.3) -- (1,0);
    \draw[thick] (1,0) -- (2,0);
    \draw[thick,dotted] (2,0) -- (2,1.8);
    \draw[thick] (2,1.8) -- node[above]{$2x_1$} (3,1.8);
    \draw[thick,dotted] (3,1.8) -- (3,0);
    \draw[thick] (3,0) -- (4,0);
    \draw[thick,dotted] (4,0) -- (4,1.2);
    \draw[thick] (4,1.2) -- node[above]{$~~~~2x_2$} (4.5,1.2);
    \draw[thick] (7.5,0) -- (8,0);
    \draw[thick,dotted] (8,0) -- (8,1.6);
    \draw[thick] (8,1.6) -- node[above]{$2x_{m-1}$}(9,1.6);
    \draw[thick,dotted] (9,1.6) -- (9,0);
    \draw[thick] (9,0) -- (10,0);
    \end{tikzpicture}
    \begin{tikzpicture}[yscale=0.75]
    \draw (-0.2,0) node[below]{$0$} -- (4.5,0);
    \draw (6,1) node{$\dots$};
    \draw[->] (7.5,0) -- (11,0) node[below]{$t$};
    \foreach \x/\xtext in 
     {1/$\nicefrac{1}{2}$,
     2/$1$,
     3/$\nicefrac{3}{2}$,
     4/$2$,
     8/$m-1$,
     9/$m-\nicefrac{1}{2}$,
     10/$m$}
     \draw (\x cm,1pt) -- (\x cm,-1pt) node[below] {$\xtext$};
    \draw[->] (0,-0.2) -- (0,2.5) node[left]{$g(t)$};
    \draw[thick] (0,0) -- (1,0);
    \draw[thick,dotted] (1,0) -- (1,1.8);
    \draw[thick] (1,1.8) -- node[above]{$2y_0$} (2,1.8);
    \draw[thick,dotted] (2,1.8) -- (2,0);
    \draw[thick] (2,0) -- (3,0);
    \draw[thick,dotted] (3,0) -- (3,1.4);
    \draw[thick] (3,1.4) -- node[above]{$2y_1$} (4,1.4);
    \draw[thick,dotted] (4,1.4) -- (4,0);
    \draw[thick] (4,0) -- (4.5,0);
    \draw[thick] (7.5,1.5) -- node[above]{$2y_{m-2}~~~~$} (8,1.5);
    \draw[thick,dotted] (8,1.5) -- (8,0);
    \draw[thick] (8,0) -- (9,0);
    \draw[thick,dotted] (9,0) -- (9,1.7);
    \draw[thick] (9,1.7) -- node[above]{$2y_{m-1}$} (10,1.7);
    \draw[thick,dotted] (10,1.7) -- (10,0);
    \end{tikzpicture}
    \caption{\footnotesize Graphs of the functions $f$ and $g$ used in the proofs of the Lemmas~\ref{lem:real:balancing} and \ref{lem:combinatorial}.}
    \label{fig:vector_diagram}
\end{figure}

\begin{lemma} \label{lem:real:balancing}
    Let $n,m \ge 1$ and $x_1,\dots,x_m,y_1,\dots,y_m,z\in\uint^{2n}$ such that $x_i \le z$ and $y_i \le z$ for all $i$.
    There exist natural numbers $1 \le a_1 \le b_1 \le a_2 \le b_2 \le \cdots \le a_n \le b_n \le m$
    such that for $I = \bigcup_{i=1}^n \{a_i, a_{i}+1,\ldots, b_{i}-1\}$,
    \[
        -2nz + \frac{1}{2}\sum_{i=1}^{m} (x_i + y_i)
        \;\;\; \le \;\;\; \sum_{i \in I} x_i + \sum_{i \notin I} y_i
        \;\;\; \le \;\;\; 2nz + \frac{1}{2}\sum_{i=1}^{m} (x_i + y_i).
    \]
\end{lemma}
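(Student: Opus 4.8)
The plan is to discretize Lemma~\ref{lem:double_bisector}. First I would turn the input vectors into step functions, exactly as drawn in Figure~\ref{fig:vector_diagram}: define $f,g\colon[0,m]\to\real^{2n}$ so that on the $i$-th unit cell $[i-1,i]$ the function $f$ equals $2x_i$ on the left half $[i-1,i-\frac{1}{2})$ and equals $0$ on the right half, while $g$ equals $0$ on the left half and $2y_i$ on the right half. Both are bounded with finitely many discontinuities, hence componentwise integrable, and $\int_0^m f=\sum_{i=1}^m x_i$ and $\int_0^m g=\sum_{i=1}^m y_i$. The point of the encoding is that for any $S\subseteq\{1,\dots,m\}$, the set $U_S:=\bigcup_{i\in S}[i-1,i]$ satisfies $\int_{U_S}f+\int_{[0,m]\setminus U_S}g=\sum_{i\in S}x_i+\sum_{i\notin S}y_i$, which is precisely the quantity the lemma asks us to bound.

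Next I would apply Lemma~\ref{lem:double_bisector} to $f,g$ on $[0,m]$, obtaining $n$ closed intervals whose union $J$ satisfies $\int_J f+\int_{[0,m]\setminus J}g=\frac{1}{2}\int_0^m(f+g)=\frac{1}{2}\sum_{i=1}^m(x_i+y_i)$. After reordering and discarding empty intervals I may assume $J=\bigcup_{k=1}^n[\alpha_k,\beta_k]$ with $0\le\alpha_1\le\beta_1\le\dots\le\beta_n\le m$. Now I would round every $\alpha_k,\beta_k$ to a nearest integer (monotonically, so the inequalities are preserved) and afterwards lower any right endpoint that became $m$ down to $m-1$; this yields integers $0\le p_1\le q_1\le\dots\le q_n\le m-1$. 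Setting $a_k:=p_k+1$ and $b_k:=q_k+1$ gives $1\le a_1\le b_1\le\dots\le b_n\le m$, and with $I:=\bigcup_{k=1}^n\{a_k,\dots,b_k-1\}$ the set $I'=\bigcup_{k=1}^n[p_k,q_k]$ is exactly the union of the unit cells whose indices lie in $I$; hence $\sum_{i\in I}x_i+\sum_{i\notin I}y_i=\int_{I'}f+\int_{[0,m]\setminus I'}g$ by the previous paragraph.

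It remains to compare this with the exact value $\frac{1}{2}\sum_{i=1}^m(x_i+y_i)=\int_J f+\int_{[0,m]\setminus J}g$. Both $J$ and $I'$ are unions of $n$ closed intervals, and the $2n$ corresponding endpoints differ by at most $\frac{1}{2}$, except that the single downward correction at $m$ may move an endpoint by up to $1$ while keeping the affected piece inside the cell $[m-1,m]$. Therefore $J\triangle I'$ is contained in a union of $2n$ intervals $\Delta_1,\dots,\Delta_{2n}$, each of which either has length at most $\frac{1}{2}$ or is contained in a single unit cell. On any such $\Delta_j$, since $f$ takes the value $2x_i\le 2z$ only on a half-cell and is $0$ elsewhere (and symmetrically for $g$), both $\int_{\Delta_j}f$ and $\int_{\Delta_j}g$ lie componentwise in $[0,z]$, so $\int_{\Delta_j}(f-g)$ lies componentwise in $[-z,z]$. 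The difference $\bigl(\int_{I'}f+\int_{[0,m]\setminus I'}g\bigr)-\bigl(\int_J f+\int_{[0,m]\setminus J}g\bigr)$ is a signed sum of these $2n$ vectors $\int_{\Delta_j}(f-g)$, hence lies componentwise in $[-2nz,2nz]$. Combining this with the two identities above yields $-2nz+\frac{1}{2}\sum_{i=1}^m(x_i+y_i)\le\sum_{i\in I}x_i+\sum_{i\notin I}y_i\le 2nz+\frac{1}{2}\sum_{i=1}^m(x_i+y_i)$, as claimed.

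I expect the rounding estimate to be the main obstacle: the $2nz$ bound has to come out exactly, and it does so only because one rounds to a \emph{nearest} integer (so each of the $2n$ endpoints moves by at most $\frac{1}{2}$) and because the factor $2$ in ``$2x_i$'' precisely cancels the half-cell length $\frac{1}{2}$, so that a single rounded endpoint never costs more than $z$ per component. A small but genuine chore is to check that the boundary behaviour respects the constraints $1\le a_1$ and $b_n\le m$: the offset $a_k=p_k+1$ makes the left side automatic, whereas the right side forces the one downward correction at $m$, which one must verify stays within the single cell $[m-1,m]$ so as not to spoil the bound.
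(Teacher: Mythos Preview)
Your approach is essentially identical to the paper's: the same step functions $f,g$ from Figure~\ref{fig:vector_diagram}, the same application of Lemma~\ref{lem:double_bisector}, the same nearest-integer rounding (the paper writes it as $a'_i=\lfloor a_i+\tfrac12\rfloor$), and the same per-endpoint error bound of $z$; the paper phrases the error estimate as ``replace the $2n$ endpoints one at a time, each step changes the left-hand side by at most $z$'', which is exactly your symmetric-difference argument unrolled. The one loose end is your boundary correction: lowering only right endpoints that round to $m$ can leave some $p_k=m>q_k=m-1$ (when $\alpha_k\ge m-\tfrac12$), breaking $a_k\le b_k$; this is harmless because then $[\alpha_k,\beta_k]\subseteq[m-\tfrac12,m]$ lies where $f=0$, so the interval can be dropped or both endpoints set to $m-1$ within the same $z$ budget---but you should say so explicitly.
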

\begin{proof}
    For the proof it is advantageous to start the indices of $x_i$ and $y_i$ at $0$.
    We first define two functions $f$ and $g$ that distribute the values
    $x_0,\dots,x_{m-1},y_0,\dots,y_{m-1}\in\uint^{2n}$ equally over the interval
    $[0,m)$, and then we apply Lemma~\ref{lem:double_bisector}.
    Let $f,g\colon[0,m] \to \real^{2n}$ such that
    \begin{align*}
    f(t) &=
        \begin{cases}
            2x_i & \text{if $t \in [i,i+\half)$}\\
            (0,\dots,0) & \text{otherwise}
        \end{cases}\\
    \intertext{and}
    g(t) &=
        \begin{cases}
            2y_i & \text{if $t \in [i+\half,i+1)$}\\
            (0,\dots,0) & \text{otherwise.}
        \end{cases}
    \end{align*}
    Figure~\ref{fig:vector_diagram} shows the graph of $f$ and $g$.
    Note that both functions are componentwise integrable.
    Moreover, for $i\in\{0,\dots,m-1\}$,
    \begin{equation} \label{eqn_fxgya}
        \int_i^{i+1} f(t) \,dt \;=\; x_i
        \quad \mbox{and} \quad
        \int_i^{i+1} g(t) \,dt \;=\; y_i.
    \end{equation}
    By Lemma~\ref{lem:double_bisector} there exist closed intervals $I_i = [a_i,b_i] \subseteq [0,m]$ where $1 \le i \le n$
    such that for $I=\bigcup_{i=1}^{n}[a_i,b_i]$ it holds that
    \begin{equation} \label{eqn_62572}
        \int\limits_{I} f(t) \,dt +
        \int\limits_{[0,m]\setminus I} g(t) \,dt \;\;=\;\; 
        \frac{1}{2}\int\limits_{[0,m]} f(t) + g(t) \,dt.
    \end{equation}
    We may assume
    $0 \le a_1 \le b_1 \le a_2 \le b_2 \le \cdots \le a_n \le b_n \le m$.
    For $1 \le i \le n$ let
    $$a'_i := \lfloor a_i + \nicefrac{1}{2} \rfloor \quad \mbox{and} \quad b'_i := \lfloor b_i + \nicefrac{1}{2} \rfloor.$$
    Note that the $a'_i, b'_i$ are natural numbers such that
    $0 \le a'_1 \le b'_1 \le a'_2 \le b'_2 \le \cdots \le a'_n \le b'_n \le m$.
    By the definition of $f$ and $g$, for $1 \le i \le n$ it holds that
    \begin{equation*}
        \left| \int_{a_i}^{a'_i} f(t) \,dt \right| + \left| \int_{a_i}^{a'_i} g(t) \,dt \right| \le z \quad \mbox{and} \quad
        \left| \int_{b_i}^{b'_i} f(t) \,dt \right| + \left| \int_{b_i}^{b'_i} g(t) \,dt \right| \le z,
    \end{equation*}
    where $|(v_1, \ldots, v_{2n})| := (|v_1|, \ldots, |v_{2n}|)$ for
    $v_1, \ldots, v_{2n} \in \real$.
    So if some $a_i$ (resp., $b_i)$ is replaced by $a'_i$ (resp., $b'_i)$,
    then the left-hand side of (\ref{eqn_62572}) changes at most by $z$.
    Hence, for $I'=\bigcup_{i=1}^{n}[a'_i,b'_i]$ it holds that
    \begin{equation} \label{eqn_72348}
        -2nz + \frac{1}{2}\int\limits_{[0,m]} f(t) + g(t) \,dt
        \;\;\; \le \;\;\; \int\limits_{I'} f(t) \,dt \; + \!\!\!\!\! \int\limits_{[0,m]\setminus I'} g(t) \,dt
        \;\;\; \le \;\;\; 2nz + \frac{1}{2}\int\limits_{[0,m]} f(t) + g(t) \,dt.
    \end{equation}
    Let $I'' = \bigcup_{i=1}^n \{a'_i, a'_{i}+1,\ldots, b'_{i}-1\}$.
    From (\ref{eqn_fxgya}) and (\ref{eqn_72348}) we obtain
    \begin{equation*}
        -2nz + \frac{1}{2}\sum_{i=0}^{m-1} (x_i + y_i)
        \;\;\; \le \;\;\; \sum_{i \in I''} x_i + \sum_{i \notin I''} y_i
        \;\;\; \le \;\;\; 2nz + \frac{1}{2}\sum_{i=0}^{m-1} (x_i + y_i).
    \end{equation*}
\end{proof}

Next we state the integer variant of Lemma~\ref{lem:real:balancing}.
\begin{corollary} \label{coro:combinatorial:integer}
    Let $n,m \ge 1$, $x_1,\dots,x_m\in\sint^{2n}$, and $z \in \uint^{2n}$
    such that $-z \le x_i \le z$ for all $i$.
    There exist natural numbers $1 \le a_1 \le b_1 \le a_2 \le b_2 \le \cdots \le a_n \le b_n \le m$
    such that for $I = \bigcup_{i=1}^n \{a_i, a_{i}+1,\ldots, b_{i}-1\}$,
    \[
        -4nz \;\;\le\;\; \sum_{i \in I} x_i - \sum_{i \notin I} x_i \;\;\le\;\; 4nz.
    \]
\end{corollary}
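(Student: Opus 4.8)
The plan is to reduce Corollary~\ref{coro:combinatorial:integer} to Lemma~\ref{lem:real:balancing} by a simple change of variables that turns the signed sum $\sum_{i\in I}x_i - \sum_{i\notin I}x_i$ into an expression of the shifted-and-scaled form $\sum_{i\in I}x_i' + \sum_{i\notin I}y_i'$ handled by the lemma. The natural choice is $x_i' := x_i + z$ and $y_i' := z - x_i$, so that $0 \le x_i' \le 2z$ and $0 \le y_i' \le 2z$, i.e.\ both sequences are bounded componentwise by $z' := 2z \in \uint^{2n}$. These vectors have nonnegative integer entries, so Lemma~\ref{lem:real:balancing} applies to them with the bound $z'$.

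First I would apply Lemma~\ref{lem:real:balancing} to the sequences $x_1',\dots,x_m'$ and $y_1',\dots,y_m'$ with upper bound $z' = 2z$, obtaining natural numbers $1 \le a_1 \le b_1 \le \cdots \le a_n \le b_n \le m$ such that for $I = \bigcup_{i=1}^n\{a_i,\dots,b_i-1\}$,
\[
    -2nz' + \tfrac12\sum_{i=1}^m(x_i'+y_i') \;\le\; \sum_{i\in I}x_i' + \sum_{i\notin I}y_i' \;\le\; 2nz' + \tfrac12\sum_{i=1}^m(x_i'+y_i').
\]
Now I would compute the two sides. On the one hand $x_i' + y_i' = 2z$ for every $i$, so $\tfrac12\sum_{i=1}^m(x_i'+y_i') = mz$, and $2nz' = 4nz$. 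On the other hand $\sum_{i\in I}x_i' + \sum_{i\notin I}y_i' = \sum_{i\in I}(x_i+z) + \sum_{i\notin I}(z-x_i) = mz + \big(\sum_{i\in I}x_i - \sum_{i\notin I}x_i\big)$, using that $I$ and its complement partition $\{1,\dots,m\}$ so the $z$-terms contribute $mz$ in total. Substituting these into the displayed inequality and cancelling the common term $mz$ from all three parts yields exactly
\[
    -4nz \;\le\; \sum_{i\in I}x_i - \sum_{i\notin I}x_i \;\le\; 4nz,
\]
as required, with the same intervals $I_1,\dots,I_n$.

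There is essentially no obstacle here: the only thing to be careful about is the bookkeeping of which index set multiplies $z$, i.e.\ confirming that the $z$-contributions from $\sum_{i\in I}(x_i+z)$ and $\sum_{i\notin I}(z-x_i)$ add up to $mz$ independently of $I$, which is immediate since $|I| + |\{1,\dots,m\}\setminus I| = m$. One should also note that the entries of $x_i'$ and $y_i'$ are genuinely in $\uint$ (nonnegative) precisely because $-z \le x_i \le z$, which is the hypothesis; this is what licenses the application of Lemma~\ref{lem:real:balancing}. The factor $4n$ in the error term is then just the inherited $2nz' = 2n\cdot 2z$.
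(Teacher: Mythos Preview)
Your proof is correct and is essentially identical to the paper's own argument: the paper also sets $x'_i := z + x_i$ and $y'_i := z - x_i$, applies Lemma~\ref{lem:real:balancing} with the bound $2z$, and cancels the common $mz$ term to obtain the stated inequality.
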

\begin{proof}
    Let $x'_i := z+x_i$ and $y'_i := z-x_i$.
    Thus $x'_i, y'_i \in \uint^{2n}$ and $x'_i,y'_i \le 2z$.
    Lemma~\ref{lem:real:balancing} applied to $x'_i$ and $y'_i$ provides natural numbers $1 \le a_1 \le b_1 \le a_2 \le b_2 \le \cdots \le a_n \le b_n \le m$
    such that for $I = \bigcup_{i=1}^n \{a_i, a_{i}+1,\ldots, b_{i}-1\}$,
    \begin{equation*}
        -4nz + \frac{1}{2}\sum_{i=1}^{m} (x'_i + y'_i)
        \;\;\; \le \;\;\; \sum_{i \in I} x'_i + \sum_{i \notin I} y'_i
        \;\;\; \le \;\;\; 4nz + \frac{1}{2}\sum_{i=1}^{m} (x'_i + y'_i).
    \end{equation*}
    Therefore,
    \begin{equation*}
        -4nz + \frac{2mz}{2} + \frac{1}{2}\sum_{i=1}^{m} (x_i - x_i)
        \;\;\; \le \;\;\; mz + \sum_{i \in I} x_i - \sum_{i \notin I} x_i
        \;\;\; \le \;\;\; 4nz + \frac{2mz}{2} + \frac{1}{2}\sum_{i=1}^{m} (x_i - x_i).
    \end{equation*}
\end{proof}

For the applications to maximum asymmetric traveling salesman and maximum weighted satisfiability
we need the following variant of Lemma~\ref{lem:real:balancing}.
While providing only a lower bound for the balanced sum, it estimates the rounding error more precisely.
\begin{lemma} \label{lem:combinatorial}
    Let $n,m \ge 1$ and $x_1,\dots,x_m,y_1,\dots,y_m\in\uint^{2n}$.
    There exists an $n' \in \{0, \ldots, n\}$ and natural numbers $1 \le a_1 \le b_1 < a_2 \le b_2 < \cdots < a_{n'} \le b_{n'} \le m$
    such that for $I = \bigcup_{i \in \{1,\ldots,n'\}} \{a_i, a_{i}+1,\ldots, b_i\}$,
    \[
        y_{b_1} + y_{b_2} + \cdots + y_{b_{n'}} + \sum_{i \in I} x_i + \sum_{i \notin I} y_i \;\;\ge\;\; \frac{1}{2} \sum_{i=1}^{m}(x_i + y_i).
    \]
\end{lemma}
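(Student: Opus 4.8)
The plan is to derive Lemma~\ref{lem:combinatorial} from Lemma~\ref{lem:real:balancing} by a careful accounting of the rounding error, exploiting the structure of the functions $f$ and $g$ rather than using the crude bound $2nz$. First I would set up essentially the same construction as in the proof of Lemma~\ref{lem:real:balancing}: reindex from $0$, define the step functions $f$ and $g$ on $[0,m]$, and invoke Lemma~\ref{lem:double_bisector} to obtain real endpoints $0 \le a_1 \le b_1 \le \cdots \le a_n \le b_n \le m$ with
\[
    \int\limits_{I} f(t)\,dt + \int\limits_{[0,m]\setminus I} g(t)\,dt = \frac{1}{2}\int\limits_{[0,m]} f(t)+g(t)\,dt
\]
for $I = \bigcup_i [a_i,b_i]$. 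The key difference from the earlier proof is that here we are only after a \emph{lower} bound, so we have freedom in how we round each $a_i$ and $b_i$ to integers, and we may choose the rounding direction that never decreases the left-hand side (or decreases it by a controlled amount that we can recover).

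The main idea is: round each $a_i$ \emph{down} to $a_i' = \lfloor a_i \rfloor$ and each $b_i$ \emph{up} so that the discrete index set $\{a_i', a_i'+1, \ldots, b_i'\}$ covers at least the interval $[a_i, b_i]$. Because $f$ and $g$ are nonnegative, enlarging $I$ in the $f$-part and shrinking the complement in the $g$-part can only help when the integrands agree with the block values; the only genuine loss occurs on the half-open pieces where $f$ or $g$ is zero. Concretely, on each unit cell $[j, j+1)$ the function $f$ carries mass $x_j$ on $[j, j+\nicefrac12)$ and is $0$ on $[j+\nicefrac12, j+1)$, while $g$ is $0$ on $[j, j+\nicefrac12)$ and carries $y_j$ on $[j+\nicefrac12, j+1)$. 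So when an endpoint $b_i$ lies in the second half of a cell $[j+\nicefrac12, j+1)$, extending the discrete block to include index $j$ adds exactly $x_j - (\text{already-counted }g\text{-part on }[b_i, j+1))$, which is at least $x_j - y_j$ in the worst case but is compensated because we also stop counting $y_j$ on the complement; the residual deficit is at most the $g$-mass $y_j$ on the sub-cell $[j+\nicefrac12,b_i)$, and this is exactly the kind of term the extra summand $y_{b_i}$ in the statement is designed to absorb. I would choose $n'$ to be the number of nondegenerate intervals after rounding (discarding any that collapse), relabel, and verify the chain $1 \le a_1 \le b_1 < a_2 \le \cdots \le b_{n'} \le m$ by merging blocks that become adjacent after rounding.

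The hard part will be the bookkeeping at the interval boundaries: showing that for each $i$, replacing the real endpoints $a_i, b_i$ by the chosen integer endpoints changes the quantity $\int_I f + \int_{[0,m]\setminus I} g$ by at least $-y_{b_i}$ (and by a nonnegative amount at the $a_i$-end), so that summing over $i$ and adding $y_{b_1}+\cdots+y_{b_{n'}}$ recovers at least $\frac12\int_{[0,m]} f+g\,dt = \frac12\sum_j (x_j+y_j)$. I expect this requires a short case analysis on whether each endpoint falls in the first or second half of its cell, using nonnegativity of $x_j, y_j$ throughout, and care that the degenerate/merged intervals are handled so that no index is double-counted. Once the boundary estimate is established, translating back from integrals to sums via the identities $\int_j^{j+1} f = x_j$, $\int_j^{j+1} g = y_j$ finishes the proof, with $I = \bigcup_{i} \{a_i, \ldots, b_i\}$ in the closed-index form demanded by the statement.
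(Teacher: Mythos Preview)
Your overall plan matches the paper's: build the step functions $f,g$, invoke Lemma~\ref{lem:double_bisector}, and then discretize the endpoints while tracking the error. The gap is in your rounding rule for the left endpoints. You fix $a_i' = \lfloor a_i \rfloor$ and assert that the change at the $a_i$-end is nonnegative; this is false when $a_i$ lies in the second half of a cell. If $a_i \in (j+\nicefrac12,\, j+1)$, then enlarging $I$ from $a_i$ down to $j$ adds $\int_j^{a_i} f = x_j$ to the $f$-side but simultaneously \emph{removes} $\int_{j+1/2}^{a_i} g$ (which can be as large as $y_j$) from the $g$-side of the complement. The net change $x_j - \int_{j+1/2}^{a_i} g$ can be negative (take $x_j=0$), and the statement provides no $y_{a_i}$-type term to absorb it---only $y_{b_1},\dots,y_{b_{n'}}$ are available. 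So the claimed inequality would fail with your rounding.

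The fix, which is exactly what the paper does, is to let the rounding direction for each endpoint depend on which half-cell it sits in, so that every individual move is nonnegative: round $a_i$ down to $j$ if $a_i\in[j,j+\nicefrac12)$ (then $g=0$ on the added piece) but \emph{up} to $j+1$ if $a_i\in[j+\nicefrac12,j+1)$ (then $f=0$ on the removed piece). For $b_i$ the paper rounds to the half-integer $c_i+\nicefrac12$ by the symmetric rule; the extra $y_{c_i}$ then appears not as compensation for a deficit but because index $c_i$ is naturally counted twice, once as $x_{c_i}$ in $\int_{a_i}^{c_i+1/2} f$ and once as $y_{c_i}$ in $\int_{c_i+1/2}^{a_{i+1}} g$. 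A short cleanup (deleting intervals with $a_j,b_j$ in the same half-cell $[i+\nicefrac12,i+1]$, and merging when $b_j,a_{j+1}$ share a half-cell $[i,i+\nicefrac12]$) is needed \emph{before} rounding so that these moves preserve the order $a_1\le b_1<a_2\le\cdots$; merging only afterward, as you suggest, does not suffice to justify the per-endpoint bounds.
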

\begin{proof}
    Again let the indices of $x_i$ and $y_i$ start at $0$ and
    define the componentwise integrable functions $f$ and $g$ as in the proof of Lemma~\ref{lem:real:balancing}.
    So for $i\in\{0,\dots,m-1\}$,
    \begin{equation} \label{eqn_fxgy}
        \int_i^{i+\half} f(t) \,dt \;\;=\;\; x_i
        \quad \mbox{and} \quad
        \int_{i+\half}^{i+1} g(t) \,dt \;\;=\;\; y_i.
    \end{equation}
    By Lemma~\ref{lem:double_bisector} there exists an $n' \in \{0,\ldots,n\}$
    and closed intervals $I_i = [a_i,b_i] \subseteq [0,m]$ where $1 \le i \le n'$
    such that for $I=\bigcup_{i=1}^{n'}[a_i,b_i]$
    it holds that
    \begin{align}
        \int\limits_{I} f(t) \,dt +
        \int\limits_{[0,m]\setminus I} g(t) \,dt \;\;\ge\;\; 
        \frac{1}{2}\int\limits_{[0,m]} f(t) + g(t) \,dt.\label{eq:combinatorial_1}
    \end{align}
    Here we only need the inequality even though Lemma~\ref{lem:double_bisector}
    states an equality.
    We may assume
    \begin{equation} \label{eqn_order}
        0 \le a_1 \le b_1 \le a_2 \le b_2 \le \cdots \le a_{n'} \le b_{n'} \le m.
    \end{equation}

    By the definition of $f$ and $g$,
    the following holds for every $i\in\{0,\dots,m-1\}$:
    \begin{align*}
        t \in [i,i+\half) &\implies g(t) = (0,\dots,0)\\
        t \in [i+\half,i+1) &\implies f(t) = (0,\dots,0)
    \end{align*}

    \begin{claim} \label{claim_a}
        We may assume that
        $\{a_j,b_j\} \not\subseteq [i+\half,i+1]$ and
        $\{b_j,a_{j+1}\} \not\subseteq [i,i+\half]$
        for all $j \in \{1,\ldots,n'\}$ and $i\in\{0,\dots,m-1\}$.
    \end{claim}
    \begin{proof}
        If $a_j,b_j \in [i+\half,i+1]$,
        then $f$ is $0$ on $[a_j,b_j)$ and hence $\int_{a_j}^{b_j} f(t) \,dt = 0$.
        Thus the left-hand side of \eqref{eq:combinatorial_1}
        does not decrease if we remove the interval $[a_j,b_j]$ from $I$.
        Similarly, if $b_j,a_{j+1} \in [i,i+\half]$,
        then $g$ is $0$ on $[b_j,a_{j+1})$ and
        hence $\int_{b_j}^{a_{j+1}} g(t) \,dt = 0$.
        Thus the left-hand side of \eqref{eq:combinatorial_1}
        does not decrease if we replace the intervals $[a_j,b_j]$ and
        $[a_{j+1},b_{j+1}]$ by the interval $[a_j,b_{j+1}]$.
        Note that after these changes (which include a decrement of $n'$), (\ref{eqn_order}) still holds.
    \end{proof}

    \begin{claim} \label{claim_b}
        We may assume that $a_1, \ldots, a_{n'} \in \uint$ and
        $b_1 + \half, \ldots, b_{n'} + \half \in \uint$.
    \end{claim}
    \begin{proof}
        Assume $a_j \in [i+\half,i+1)$.
        By Claim~\ref{claim_a}, $b_j \notin [i+\half,i+1]$ and hence $b_j > i+1$.
        Since $f$ is $0$ on $[i+\half,i+1)$,
        the left-hand side of \eqref{eq:combinatorial_1}
        does not decrease if we let $a_j := i+1$.
        After this change, (\ref{eqn_order}) still holds.

        Assume $a_j \in [i,i+\half)$.
        By Claim~\ref{claim_a},
        $b_{j-1} \notin [i,i+\half]$ and hence $b_{j-1} < i$ (for $j \ge 2$).
        Since $g$ is $0$ on $[i,i+\half)$,
        the left-hand side of \eqref{eq:combinatorial_1}
        does not decrease if we let $a_j := i$.
        After this change, (\ref{eqn_order}) still holds.

        Assume $b_j \in [i+\half,i+1)$.
        By Claim~\ref{claim_a},
        $a_j \notin [i+\half,i+1]$ and hence $a_j < i+\half$.
        Since $f$ is $0$ on $[i+\half,i+1)$,
        the left-hand side of \eqref{eq:combinatorial_1}
        does not decrease if we let $b_j := i+\half$.
        After this change, (\ref{eqn_order}) still holds.

        Assume $b_j \in [i,i+\half)$ and $i<m$.
        By Claim~\ref{claim_a},
        $a_{j+1} \notin [i,i+\half]$ and hence $a_{j+1} > i+\half$ (for $j<n'$).
        Since $g$ is $0$ on $[i,i+\half)$,
        the left-hand side of \eqref{eq:combinatorial_1}
        does not decrease if we let $b_j := i+\half$.
        After this change, (\ref{eqn_order}) still holds.

        It remains to argue for the special case $b_j = m$.
        By Claim~\ref{claim_a},
        $a_j \notin [m-\half,m]$ and hence $a_j < m-\half$.
        Since $f$ is $0$ on $[m-\half,m)$,
        the left-hand side of \eqref{eq:combinatorial_1}
        does not decrease if we let $b_j := m-\half$.
        After this change, (\ref{eqn_order}) still holds.
    \end{proof}

    If we split the integrals on the left-hand side of (\ref{eq:combinatorial_1})
    according to $I=\bigcup_{i=1}^{n'}[a_i,b_i]$,
    we obtain
    \begin{equation} \label{eqn_31415}
        \int\limits_{0}^{a_1} g(t) \,dt
        \;+\;
        \sum_{i=1}^{n'-1}
        \left(
        \int\limits_{a_i}^{b_i} f(t) \,dt +
        \int\limits_{b_i}^{a_{i+1}} g(t) \,dt
        \right)
        \;+\;
        \int\limits_{b_{n'}}^{m} g(t) \,dt
        \;\;\;\ge\;\;\; 
        \frac{1}{2}\int\limits_{[0,m]} f(t) + g(t) \,dt.
    \end{equation}

    For $i \in \{1,\ldots,n'\}$ let $c_i = b_i-\half$.
    From Claim~\ref{claim_b} and (\ref{eqn_order}) it follows that
    $$0 \le a_1 \le c_1 < a_2 \le c_2 < \cdots < a_{n'} \le c_{n'} \le m-1.$$
    Together with (\ref{eqn_fxgy}) we obtain:
    \begin{eqnarray*}
        \int\limits_{0}^{a_1} g(t) \,dt &=& y_0 + y_1 + \cdots + y_{a_1-1}\\
        \int\limits_{a_i}^{b_i} f(t) \,dt &=& x_{a_i} + x_{a_i+1} + \cdots + x_{c_i}\\
        \int\limits_{b_i}^{a_{i+1}} g(t) \,dt &=& y_{c_i} + y_{c_i+1} + \cdots + y_{a_{i+1}-1}\\
        \int\limits_{b_{n'}}^{m} g(t) \,dt &=& y_{c_{n'}} + y_{c_{n'}+1} + \cdots + y_{m-1}
    \end{eqnarray*}
    In these sums, each index $j\in\{c_1,c_2,\ldots,c_{n'}\}$ appears exactly twice,
    once as $x_{j}$ and once as $y_{j}$.
    All remaining indices $j \in \{0,\ldots,m-1\} \setminus \{c_1, c_2, \ldots, c_{n'}\}$ appear exactly once, either as $x_j$ or as $y_j$.
    Therefore, with $I' = \bigcup_{i \in \{1,\ldots,n'\}} \{a_i,a_i + 1, \ldots, c_i\}$ the left-hand side of (\ref{eqn_31415}) is equal to
    \begin{equation} \label{eqn_31416}
        y_{c_1} + y_{c_2} + \cdots + y_{c_{n'}} + \sum_{i \in I'} x_i + \sum_{i \notin I'} y_i.
    \end{equation}
    Applying (\ref{eqn_fxgy}) to the right-hand side of (\ref{eqn_31415})
    yields the desired inequality
    \[
        y_{c_1} + y_{c_2} + \cdots + y_{c_{n'}} + \sum_{i \in I'} x_i + \sum_{i \notin I'} y_i
        \;\;\;\ge\;\;\; 
        \frac{1}{2} \sum_{i=0}^{m-1}(x_i + y_i).\qedhere
    \]
\end{proof}

\begin{corollary} \label{coro:combinatorial}
    Let $n,m \ge 1$ and $x_1,\dots,x_m,y_1,\dots,y_m,z\in\uint^{2n}$
    such that $y_i \le z$ for all $i$.
    There exist $n' \le \min(n,m)$
    disjoint, nonempty intervals $I_1, \ldots, I_{n'} \subseteq \{1,\ldots,m\}$
    such that for $I = I_1 \cup \cdots \cup I_{n'}$,
    \[
        n' \cdot z + \sum_{i \in I} x_i + \sum_{i \notin I} y_i \;\;\ge\;\; \frac{1}{2} \sum_{i=1}^{m}(x_i + y_i).
    \]
\end{corollary}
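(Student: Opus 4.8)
The plan is to derive the statement directly from Lemma~\ref{lem:combinatorial} by absorbing the extra terms $y_{b_1} + \cdots + y_{b_{n'}}$ into the summand $n' \cdot z$. First I would apply Lemma~\ref{lem:combinatorial} to the vectors $x_1,\dots,x_m,y_1,\dots,y_m \in \uint^{2n}$. This yields an index $n' \in \{0,\dots,n\}$ together with natural numbers $1 \le a_1 \le b_1 < a_2 \le b_2 < \cdots < a_{n'} \le b_{n'} \le m$ such that, for $I = \bigcup_{i=1}^{n'}\{a_i, a_i+1,\ldots,b_i\}$,
\[
    y_{b_1} + \cdots + y_{b_{n'}} + \sum_{i \in I} x_i + \sum_{i \notin I} y_i \;\;\ge\;\; \frac{1}{2}\sum_{i=1}^{m}(x_i + y_i).
\]

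Next I would set $I_j = \{a_j, a_j+1, \ldots, b_j\}$ for $1 \le j \le n'$. Each $I_j$ is nonempty since $a_j \le b_j$, the $I_j$ are pairwise disjoint since $b_j < a_{j+1}$, and all of them are contained in $\{1,\dots,m\}$ because $1 \le a_1$ and $b_{n'} \le m$; hence $I = I_1 \cup \cdots \cup I_{n'}$ coincides with the set $I$ from the lemma. Since $y_i \le z$ holds componentwise for every $i$, in particular $y_{b_j} \le z$ for each $j$, so $y_{b_1} + \cdots + y_{b_{n'}} \le n' \cdot z$ componentwise. Substituting this bound into the displayed inequality gives
\[
    n' \cdot z + \sum_{i \in I} x_i + \sum_{i \notin I} y_i \;\;\ge\;\; y_{b_1} + \cdots + y_{b_{n'}} + \sum_{i \in I} x_i + \sum_{i \notin I} y_i \;\;\ge\;\; \frac{1}{2}\sum_{i=1}^{m}(x_i + y_i),
\]
which is the asserted estimate.

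Finally I would check the bound $n' \le \min(n,m)$. The inequality $n' \le n$ is immediate from Lemma~\ref{lem:combinatorial}, and $n' \le m$ follows because $b_1 < b_2 < \cdots < b_{n'}$ are $n'$ distinct elements of $\{1,\dots,m\}$ (equivalently, $I_1,\dots,I_{n'}$ are $n'$ pairwise disjoint nonempty subsets of the $m$-element set $\{1,\dots,m\}$). There is essentially no real obstacle here; the only points requiring a moment's care are verifying that the intervals delivered by the lemma are genuinely disjoint and nonempty — which is exactly what the strict inequalities $b_i < a_{i+1}$ in its statement guarantee — and the trivial cardinality argument bounding $n'$ by $m$.
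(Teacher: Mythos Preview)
Your proposal is correct and is exactly the intended derivation: the paper states Corollary~\ref{coro:combinatorial} without proof, treating it as immediate from Lemma~\ref{lem:combinatorial}, and your argument---bounding $y_{b_1}+\cdots+y_{b_{n'}}$ by $n'\cdot z$, reading off disjointness and nonemptiness from the strict inequalities $b_i<a_{i+1}$, and counting to get $n'\le m$---is precisely that immediate step.
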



\section{Applications to Multi-Objective Optimization}

\subsection{Preliminaries} \label{sec:multi_defs} Consider some multi-objective maximization
problem $\mathcal{O}$ that consists of a set of instances $\mathcal{I}$, a set of solutions
$S(x)$ for each instance $x \in \mathcal{I}$, and a function $w$ assigning a
$k$-dimensional weight $w(x, s) \in \uint^k$ to each solution $s \in S(x)$
depending also on the instance $x \in \mathcal{I}$. If the instance $x$ is clear
from the context, we also write $w(s) = w(x, s)$.  The components of $w$ are
written as $w_i$ for $i \in \{1,2,\dots,k\}$.  For weights $a = (a_1 , \dots,
a_k)$, $b = (b_1 , \dots, b_k ) \in \uint^k$ we write $a \ge b$ if $a_i \ge b_i$
for all $i \in \{1, 2, \dots, k\}$.

Let $x \in \mathcal{I}$.
The Pareto set of $x$, the set of optimal solutions, is the set $\{s \in S(x)
\mid \neg \exists s' \in S(x)\,\, (w(x, s') \ge w(x, s) 
\text{ and } w(x, s') \neq w(x, s))\}$. For solutions $s,
s' \in S(x)$ and $\alpha < 1$ we say $s$ is $\alpha$-approximated by $s'$ if
$w_i(s') \ge \alpha \cdot w_i (s)$ for all $i$. We call a set of solutions
$\alpha$-approximate Pareto set of $x$ if every solution $s \in S(x)$ (or equivalently, every
solution from the Pareto set) is $\alpha$-approximated by some $s'$ contained in
the set.

We say that some algorithm is an $\alpha$-approximation algorithm for
$\mathcal{O}$ if
it runs in polynomial time and returns an $\alpha$-approximate Pareto set of $x$
for all input instances $x \in \mathcal{I}$. We call it randomized if it is
allowed to fail with probability at most $\nicefrac{1}{2}$ over all of its
executions.
An algorithm is an FPTAS (fully polynomial-time approximation scheme) for a
given optimization problem, if on input $x$ and $0<\eps<1$ it computes a
$(1-\eps)$-approximate Pareto set of $x$ in time polynomial in
$|x|+\nicefrac{1}{\eps}$.  If the algorithm is randomized it is called FPRAS
(fully polynomial-time randomized approximation scheme).

\subsection{\texorpdfstring{$\boldsymbol{k}$}{k}-Objective Maximum Asymmetric Traveling Salesman Problem} \label{sec_atsp}

\paragraph{Definitions.}
Let $k \geq 1$.  An \emph{$\uint^k$-labeled directed graph} is a tuple
$G=(V,E,w)$, where $V$ is some finite set of vertices, $E \subseteq V \times V$
is a set of edges, and $w \colon E \to \uint^k$ is a $k$-dimensional
weight function. We denote the $i$-th component of $w$ by $w_i$ and extend $w$ to sets
of edges by taking the sum over the weights of all edges in the set. 
A set of edges $M \subseteq E$ is called \emph{matching} in $G$ 
if no two edges in $M$ share a common vertex.
A \emph{walk} in $G$ is an alternating sequence of vertices and
edges $v_0,e_1,v_1, \dots e_{m},v_m$, where $v_i \in V$, $e_j \in E$, and
$e_j=(v_{j-1},v_j)$ for all $0 \leq i \leq m$ and $1 \leq j \leq m$.  If the
sequence of vertices $v_0,v_1,\dots,v_m$ does not contain any repetitions, the
walk is called a \emph{path} and if $v_0,v_1,\dots,v_{m-1}$ does not contain any
repetitions and $v_{m} = v_0$, it is called a \emph{cycle}.  A cycle in $G$ 
is called \emph{Hamiltonian} if it visits every vertex in $G$.  For
simplicity, we will interpret paths and cycles as sets of edges and can thus
(using the above mentioned extension of $w$ to sets of edges) write $w(C)$ for
the (multidimensional) weight of a Hamiltonian cycle $C$ of $G$.

Given some $\uint^k$-labeled directed graph as input,
our goal is to find a maximum Hamiltonian cycle.
We will also use the multi-objective version
of the maximum matching problem.
These two maximization problems are defined as follows:

\begin{quote}
\begin{tabbing}
\textbf{$\boldsymbol{k}$-Objective Maximum Asymmetric Traveling Salesman
Problem}\\ 
\textbf{($\boldsymbol{k}$-\bfATSP)}\\
Instance: \= $\uint^k$-labeled directed complete graph $(V,E,w)$\\
Solution: \> Hamiltonian cycle $C$\\
Weight: \> $w(C)$
\end{tabbing}
\end{quote}

\begin{quote}
\begin{tabbing}
{\bf $\boldsymbol{k}$-Objective Maximum Matching
($\boldsymbol{k}$-\bfMM)}\\
Instance: \= $\uint^k$-labeled directed graph $(V,E,w)$\\
Solution: \> Matching $M$\\
Weight: \> $w(M)$
\end{tabbing}
\end{quote}
Papadimitriou and Yannakakis \cite{PY00} give an FPRAS for $k$-\MM,
which we will denote by \kMMApprox\ and use as a black box in our algorithm.
Since \kMMApprox\ will be called multiple times, we
assume that its success probability is amplified in a way such that the
probability that {\em all} calls to the FPRAS succeed is at least $\nicefrac{1}{2}$.

\paragraph{High-Level Explanation of the Algorithm.} We apply the balancing results to the multi-objective maximum asymmetric traveling salesman problem
and obtain a short algorithm that provides a randomized $\nicefrac{1}{2}$-approximation.
This improves and simplifies the $(\nicefrac{1}{2}-\eps)$-approximation that was given by Manthey \cite{man09}.
Essentially our algorithm contracts a small number of edges,
then computes a maximum matching, adds the contracted edges to the matching,
and extends the result in an arbitrary way to a Hamiltonian cycle.

The argument for the correctness of the algorithm is as follows:
Each Hamiltonian cycle $H$ induces two perfect matchings 
(the edges with odd and the edges with even sequence number in the cycle).
For each objective $i$, the weight of one of the matchings is at least
$\nicefrac{1}{2} \cdot w_i(H)$.
The balancing results assure the existence of a matching $M$ such that for {\em all} objectives
the inequality $w_i(M) \ge \nicefrac{1}{2} \cdot w_i(H)$ holds up to a small error.
This matching can be approximated with the known FPRAS for multi-objective maximum matching.
Moreover, by guessing and contracting a constant number of heavy edges in $H$
our algorithm can compensate the errors caused by the balancing and by the FPRAS.


\paragraph{Contraction and Expansion of Paths.}
Suppose that for a given $\uint^k$-labeled complete directed graph $G=(V,E,w)$
we wish to find some Hamiltonian cycle that contains a particular edge $e=(u,v)$.
This reduces to the problem of finding some Hamiltonian cycle in
the $\uint^k$-labeled complete directed graph $G'=(V',E',w')$
where the edge $e$ is contracted by combining the nodes $u$ and $v$ into a
single
node while retaining the ingoing edges of $u$ and the outgoing edges of $v$.
More formally, we remove $v$ and all incident edges from $G$
and set $w'(u,x)=w(v,x)$ for every $x \in V\setminus\{u,v\}$
(Figure \ref{fig:contract}).
Now suppose we find a Hamiltonian cycle $C'$ in $G'$.
Then there exists some $x$ such that $(u,x) \in C'$. Note that $w'(u,x)=w(v,x)$.
We replace the edge $(u,x)$ in $C'$ with the detour $(u,v),(v,x)$
and obtain a Hamiltonian cycle $C$ in $G$ passing through $e$.
Moreover, $C$ preserves the weights of $C'$ in the sense that
$w(C) = w'(C') + w(e)$.

\begin{figure}
\centering
\footnotesize
\begin{tikzpicture}[scale=1]
\usetikzlibrary{calc};
\draw(0,0) node(u){$u$};
\draw(3,2) node(v){$v$};
\draw(6,0) node(w){$x$};
\draw[->, dashed] (u.north east) -- node[above left]{$e$} (v.south west);
\draw[->, dashed] (v.south east) -- node[above right]{} (w.north west);
\draw[->] (u.north east) .. controls($(u.north east)!.8!(v.south west)$) 
 and ($(v.south east)!.8!(w.north west)$) .. node[below]{} (w.north west);
\end{tikzpicture}
\caption{\footnotesize Contracting the edge $e=(u,v)$
deletes all edges incident to $v$
and sets the weights of every edge $(u,x)$ to the weights of the edge $(v,x)$
for $x \in V \setminus \{u,v\}$.
Any Hamiltonian cycle passes through some edge $(u,x)$
and hence can be expanded to a Hamiltonian cycle through $e$
by replacing $(u,x)$ with the detour $(u,v),(v,x)$.}
\label{fig:contract}
\end{figure}
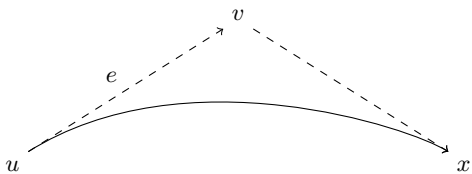

The notion of edge contractions and expansions 
can easily be extended to sets of pairwise vertex disjoint paths (Figure~\ref{fig_path_contract}),
where each path is contracted edge-by-edge starting at the last edge of the path,
and different paths can be contracted in an arbitrary order.
We make this precise with the following definitions.

\begin{figure}
\centering
\footnotesize
\begin{tikzpicture}[scale=.95,bend angle=7,pos=.3,minimum size=0mm,inner sep=1pt]
\node[circle,draw,minimum size=5mm] (b) at (0,-2) {$x$};
\node[circle,draw,minimum size=5mm] (t) at (0,2) {$y$};
\node[circle,draw,minimum size=5mm] (l) at (-2,0) {$u$};
\node[circle,draw,minimum size=5mm] (r) at (2,0) {$v$};
\draw[->] (l) to [bend left] node[auto]{$1$} (t);
\draw[->] (t) to [bend left] node[auto]{$5$} (l);
\draw[->] (t) to [bend left] node[auto]{$1$} (r);
\draw[->,very thick] (r) to [bend left] node[auto]{$3$} (t);
\draw[->] (r) to [bend left] node[auto]{$1$} (b);
\draw[->] (b) to [bend left] node[auto]{$1$} (r);
\draw[->] (b) to [bend left] node[auto]{$1$} (l);
\draw[->] (l) to [bend left] node[auto]{$1$} (b);
\draw[->,very thick] (l) to [bend left] node[auto,pos=.3]{$2$} (r);
\draw[->] (r) to [bend left] node[auto,pos=.3]{$1$} (l);
\draw[->] (t) to [bend left] node[auto,pos=.3]{$7$} (b);
\draw[->] (b) to [bend left] node[auto,pos=.3]{$1$} (t);

\node at (3.5,0) {$\xrightarrow{\mathrm{contract}_{(v,y)}}$};

\node[circle,draw,minimum size=5mm] (b2) at (6.5,-2) {$x$};
\node[circle,draw,minimum size=5mm] (l2) at (6.5-1.5,0) {$u$};
\node[circle,draw,minimum size=5mm] (r2) at (6.5+1.5,0) {$v$};
\draw[->] (r2) to [bend left] node[auto]{$7$} (b2);
\draw[->] (b2) to [bend left] node[auto]{$1$} (r2);
\draw[->] (b2) to [bend left] node[auto]{$1$} (l2);
\draw[->] (l2) to [bend left] node[auto]{$1$} (b2);
\draw[->,very thick] (l2) to [bend left] node[auto,pos=.3]{$2$} (r2);
\draw[->] (r2) to [bend left] node[auto,pos=.3]{$5$} (l2);

\node at (9.5,0) {$\xrightarrow{\mathrm{contract}_{(u,v)}}$};

\node[circle,draw,minimum size=5mm] (t3) at (11,0) {$u$};
\node[circle,draw,minimum size=5mm] (b3) at (12.5,-2) {$x$};
\draw[->] (t3) to [bend left] node[auto]{$7$} (b3);
\draw[->] (b3) to [bend left] node[auto]{$1$} (t3);

\end{tikzpicture}

\vspace{5mm}

\begin{tikzpicture}[scale=.95,bend angle=7,pos=.3,minimum size=0mm,inner sep=0pt]
\node[circle,draw,minimum size=5mm] (b) at (0,-2) {$x$};
\node[circle,draw,minimum size=5mm] (t) at (0,2) {$y$};
\node[circle,draw,minimum size=5mm] (l) at (-2,0) {$u$};
\node[circle,draw,minimum size=5mm] (r) at (2,0) {$v$};
\draw[->,very thick] (r) to [bend left] node[auto]{$3$} (t);
\draw[->] (b) to [bend left] node[auto]{$1$} (l);
\draw[->,very thick] (l) to [bend left] node[auto,pos=.3]{$2$} (r);
\draw[->] (t) to [bend left] node[auto,pos=.3]{$7$} (b);

\node at (3.5,0) {$\xleftarrow{\mathrm{expand}_{(v,y)}}$};

\node[circle,draw,minimum size=5mm] (b2) at (6.5,-2) {$x$};
\node[circle,draw,minimum size=5mm] (l2) at (6.5-1.5,0) {$u$};
\node[circle,draw,minimum size=5mm] (r2) at (6.5+1.5,0) {$v$};
\draw[->] (r2) to [bend left] node[auto]{$7$} (b2);
\draw[->] (b2) to [bend left] node[auto]{$1$} (l2);
\draw[->,very thick] (l2) to [bend left] node[auto,pos=.3]{$2$} (r2);

\node at (9.5,0) {$\xleftarrow{\mathrm{expand}_{(u,v)}}$};

\node[circle,draw,minimum size=5mm] (t3) at (11,0) {$u$};
\node[circle,draw,minimum size=5mm] (b3) at (12.5,-2) {$x$};
\draw[->] (t3) to [bend left] node[auto]{$7$} (b3);
\draw[->] (b3) to [bend left] node[auto]{$1$} (t3);

\end{tikzpicture}
    \caption{Example for the contraction of the path $\{(u,v),(v,y)\}$ in the graph $G$ resulting in
    the graph $G''$ and the
    following expansion of the tour $\{(u,x),(x,u)\}$ in $G''$.
    The final tour in $G$ includes the contracted path.
    } \label{fig_path_contract}
    \label{fig_path_expand}
\end{figure}
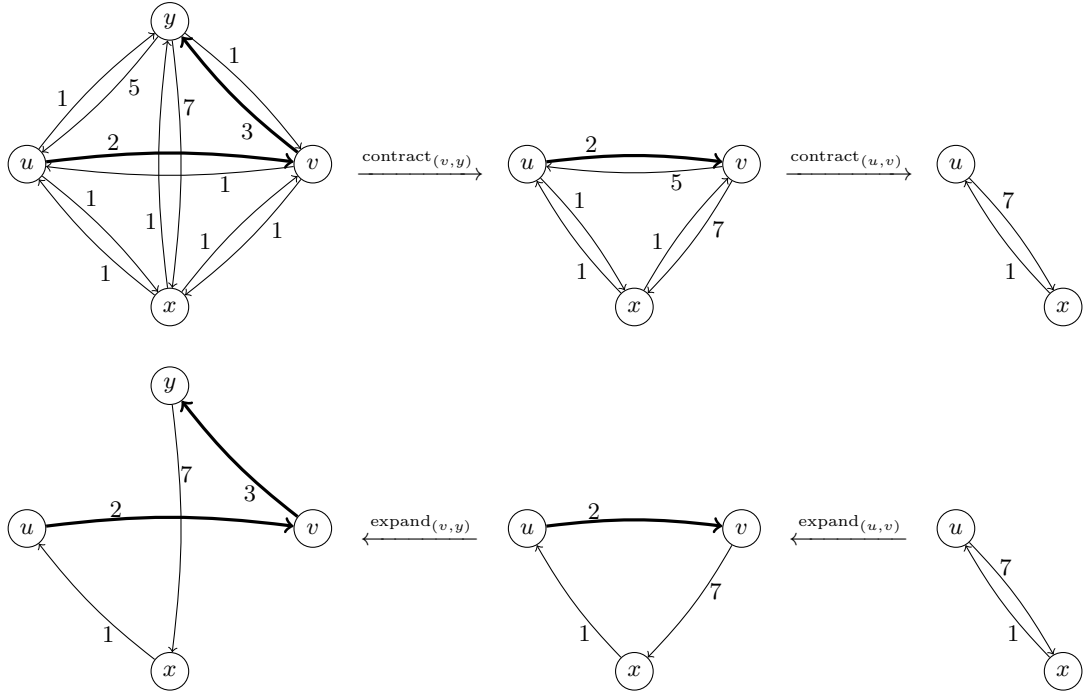

\begin{definition}
Let $G=(V,E,w)$ be some $\uint^k$-labeled complete directed graph,
let $(u,v) \in E$,
let $P \subseteq E$ be a path
$u_0,e_1,u_1,e_2,u_2,\ldots, e_r,u_r$, 
and let $Q \subseteq E$ be a set of pairwise vertex disjoint paths 
$P_1,P_2, \dots, P_r \subseteq E$.
\begin{enumerate}
\item $\mathrm{contract}_{(u,v)}(G) 
 = (V \setminus \{v\}, \{e \in E \mid \text{$v$ is not incident to $e$}\}, w')$,
where $w'(x,y)=w(x,y)$ if $x \neq u$ and $w'(u,z)=w(v,z)$.
\item $
 \mathrm{contract}_{P}(G) 
 =\mathrm{contract}_{e_1}(
    \mathrm{contract}_{e_2}(
     \dots \mathrm{contract}_{e_r}(G) \dots ))$
\item $
 \mathrm{contract}_{Q}(G) 
 =\mathrm{contract}_{P_1}(
    \mathrm{contract}_{P_2}(
     \dots \mathrm{contract}_{P_r}(G) \dots ))$
\end{enumerate}
We sometimes identify a graph with its edge set and apply $\mathrm{contract}$
directly to sets of edges and not to graphs. In this case,
we also interpret the value of $\mathrm{contract}$ as an edge set.
\end{definition}

Observe that the result of contracting several pairwise vertex disjoint paths
does not depend on the order in which the paths are contracted.

We define edge expansion in a similar manner.
Note that Definition~\ref{definition:expand} becomes essential
if $G'$ is obtained from $G$ 
by a contraction of some set $Q$ of pairwise vertex disjoint paths in $G$.

\begin{definition}\label{definition:expand}
Let $G=(V,E,w)$ and $G'=(V',E',w')$ be two $\uint^k$-labeled complete directed graphs,
let $T \subseteq E'$ be a Hamiltonian cycle of $G'$,
let $(u,v) \in E$,
let $P \subseteq E$ be a path
$u_0,e_1,u_1,e_2,u_2,\ldots ,e_r,u_r$ 
and let $Q \subseteq E$ be a set of pairwise vertex disjoint paths 
$P_1,P_2, \dots, P_r \subseteq E$.
\begin{enumerate}
\item $\mathrm{expand}_{(u,v)}(T) 
 = \{(x,y) \in T \mid x \neq u\}
 \cup \{(u,v)\}
 \cup \{(v,x) \mid (u,x) \in T\}$
\item $
 \mathrm{expand}_{P}(T) 
 =\mathrm{expand}_{e_{r}}(
    \mathrm{expand}_{e_{r-1}}(
     \dots \mathrm{expand}_{e_1}(T) \dots ))$
\item $
 \mathrm{expand}_{Q}(T) 
 =\mathrm{expand}_{P_r}(
    \mathrm{expand}_{P_{r-1}}(
     \dots \mathrm{expand}_{P_1}(T) \dots ))$
\end{enumerate}
\end{definition}

Again observe that the result of expanding several pairwise vertex disjoint paths
does not depend on the order in which the paths are expanded.

\begin{proposition}\label{prop:expanded_cycle}
Let $G=(V,E,w)$ be some $\uint^k$-labeled complete directed graph,
$Q \subseteq E$ be a set of pairwise vertex disjoint paths, and
$G'=(V',E',w')=\mathrm{contract}_Q(G)$.
For any Hamiltonian cycle $T' \subseteq E'$ of $G'$,
the edges $T=\mathrm{expand}_Q(T')$ form a Hamiltonian cycle of $G$
with $w(T) = w'(T') + w(Q)$.
\end{proposition}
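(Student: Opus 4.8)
The plan is to reduce everything to the contraction/expansion of a single edge, and then iterate: first along one path, and then over the pairwise vertex disjoint paths of $Q$.

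\emph{Single edge.} First I would prove the claim: if $e=(u,v)\in E$ and $G'=(V',E',w')=\mathrm{contract}_{(u,v)}(G)$, then for every Hamiltonian cycle $T'$ of $G'$ the set $T=\mathrm{expand}_{(u,v)}(T')$ is a Hamiltonian cycle of $G$ with $w(T)=w'(T')+w(e)$. Since $u\in V'=V\setminus\{v\}$, the Hamiltonian cycle $T'$ contains exactly one edge $(u,x)$ leaving $u$, and $x\notin\{u,v\}$. By definition $\mathrm{expand}_{(u,v)}$ replaces this single edge $(u,x)$ by the two edges $(u,v),(v,x)$, i.e., it inserts the new vertex $v$ between $u$ and $x$ in the cyclic order; hence $T$ visits every vertex of $V$ exactly once and is a Hamiltonian cycle of $G$ (all edges involved lie in $E$ because $G$ is complete). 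For the weight, $w'(a,b)=w(a,b)$ for every edge $(a,b)\in T'$ with $a\neq u$, while $w'(u,x)=w(v,x)$, so
\[
  w(T) \;=\; \sum_{\substack{(a,b)\in T'\\ a\neq u}} w(a,b) \;+\; w(u,v)+w(v,x)
        \;=\; w'(T') - w(v,x) + w(u,v) + w(v,x) \;=\; w'(T')+w(e).
\]

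\emph{One path.} Next I would lift this to a single path $P\colon u_0,e_1,u_1,\dots,e_r,u_r$ by composition. Write $G_r=G$ and $G_{i-1}=\mathrm{contract}_{e_i}(G_i)$ for $i=r,\dots,1$, so $G_0=\mathrm{contract}_P(G)$, and for a Hamiltonian cycle $T_0$ of $G_0$ put $T_i=\mathrm{expand}_{e_i}(T_{i-1})$, so $T_r=\mathrm{expand}_P(T_0)$. Applying the single-edge claim at each step, $T_i$ is a Hamiltonian cycle of $G_i$ with $w^{(i)}(T_i)=w^{(i-1)}(T_{i-1})+w^{(i)}(e_i)$, where $w^{(i)}$ is the weight function of $G_i$; summing gives $w(T_r)=w^{(0)}(T_0)+\sum_{i=1}^r w^{(i)}(e_i)$. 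The key point is $w^{(i)}(e_i)=w(e_i)$: because the vertices $u_0,\dots,u_r$ are pairwise distinct, the contractions of $e_{i+1},\dots,e_r$ only delete $u_{i+1},\dots,u_r$ and modify edges leaving $u_i,\dots,u_{r-1}$, none of which affects the edge $e_i=(u_{i-1},u_i)$, whose tail $u_{i-1}$ is untouched and whose head $u_i$ is still present in $G_i$. Hence $\sum_{i=1}^r w^{(i)}(e_i)=w(P)$ and $w(T_r)=w'(T_0)+w(P)$.

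\emph{A set of paths.} Finally, for $Q=\{P_1,\dots,P_r\}$ I would iterate the one-path case over $P_1,\dots,P_r$ in the same telescoping fashion, using the definitions of $\mathrm{contract}_Q$ and $\mathrm{expand}_Q$ (which apply the paths in reverse orders): set $H_r=G$, $H_{j-1}=\mathrm{contract}_{P_j}(H_j)$ so $H_0=G'$, and $S_0=T'$, $S_j=\mathrm{expand}_{P_j}(S_{j-1})$ so $S_r=T$. Vertex-disjointness is exactly what makes the bookkeeping go through: contracting $P_{j+1},\dots,P_r$ removes only internal vertices of those paths and modifies only edges incident to their vertices, so every edge of $P_j$ survives in $H_j$ with its original weight, and the one-path result applies verbatim at stage $j$, giving $w^{H_j}(S_j)=w^{H_{j-1}}(S_{j-1})+w(P_j)$. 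Telescoping yields $w(T)=w'(T')+\sum_{j=1}^r w(P_j)=w'(T')+w(Q)$, and since (as already noted) the order of contracting/expanding disjoint paths is immaterial, this is independent of the chosen enumeration of $Q$. The only genuinely delicate point throughout is verifying that the weight of each edge being contracted has not yet been altered by earlier contractions, which is precisely where distinctness of the path vertices, and pairwise vertex-disjointness of the paths, is used; the remaining steps are routine.
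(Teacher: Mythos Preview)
The paper states this proposition without proof; it only gives the informal single-edge argument in the paragraph preceding the definitions (replace $(u,x)$ by the detour $(u,v),(v,x)$ and observe $w(C)=w'(C')+w(e)$), and then treats the extension to paths and sets of disjoint paths as routine from the definitions. Your proof is correct and follows exactly this implicit route: you formalize the single-edge case, telescope along a path, and then telescope over the disjoint paths, carefully checking at each stage that the weight of the next edge to be expanded has not been altered by earlier contractions---which is precisely the point where distinctness of path vertices and pairwise vertex-disjointness of the paths are used. There is nothing to add; your write-up simply makes explicit what the paper left implicit.
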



\paragraph{Approximation Algorithm.} First we prove that the following algorithm computes a
$(\nicefrac{1}{2}-\eps)$-approximation for $k$-\ATSP.  Then
Theorem~\ref{thm_approx_maxtsp} shows that a modification of the algorithm
provides a $\nicefrac{1}{2}$-approximation.

\vspace{5mm}

\begin{algorithm}[H]
\algosettings
\caption{\textbf{Algorithm}: \texttt{2\ATSPApprox($V,E,w,\eps$)}}
\Input{$\uint^{2k}$-labeled complete directed graph $G=(V,E,w)$ 
 and even $\card{V}$}
\Output{set of Hamiltonian cycles of $G$}
\BlankLine
\ForEach{$F \subseteq E$ with $\card{F} \leq 2k$ that is a
    set of vertex disjoint paths}
{
 $G':= \textrm{contract}_{F}(G)$\label{alg:contract}\;
 ${\cal M} := \textrm{2\kMMApprox}(G',\eps)$\label{alg:match}\;
 \ForEach{$M \in {\cal M}$}{
  extend $M$ in an arbitrary way to a Hamiltonian cycle $T'$ in $G'$\;
  output $\textrm{expand}_{F}(T')$\;
 }
}
\end{algorithm}

\begin{lemma}\label{lem:tsp_asymp_fptas}
Let $G=(V,E,w)$ be an $\uint^{2k}$-labeled complete directed graph
with an even number of vertices, $\eps>0$, 
and $T \subseteq E$ some Hamiltonian cycle in $G$.
With probability at least $\nicefrac{1}{2}$,
\texttt{2}\ATSPApprox$(V,E,w,\eps)$ outputs a $(\nicefrac{1}{2}-\eps)$-approximation of
$T$
within time polynomial in $|(V,E,w)|+\nicefrac{1}{\eps}$.
\end{lemma}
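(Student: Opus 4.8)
The plan is to show that some iteration of the outer loop of \texttt{2}\ATSPApprox\ produces a Hamiltonian cycle that $(\nicefrac{1}{2}-\eps)$-approximates the given cycle $T$. Fix the Hamiltonian cycle $T = \{e_1, e_2, \ldots, e_m\}$ with $m = \card{V}$ (even), where the edges are listed in cyclic order. Split $T$ into its two perfect matchings $M_{\mathrm{odd}}$ (edges $e_1, e_3, \ldots$) and $M_{\mathrm{even}}$ (edges $e_2, e_4, \ldots$); these exist because $m$ is even. For each objective $i \in \{1, \ldots, 2k\}$ at least one of the two matchings has weight $\ge \nicefrac{1}{2} \cdot w_i(T)$, but the ``good'' matching may differ across objectives. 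To get a \emph{single} matching that is simultaneously good in all $2k$ objectives up to a small additive error, I apply the balancing result of Corollary~\ref{coro:combinatorial} (or Lemma~\ref{lem:combinatorial}) to the sequences $x_j := w(e_{2j-1})$ and $y_j := w(e_{2j})$, $j = 1, \ldots, m/2$. This yields $n' \le \min(k, m/2)$ disjoint intervals of edge-pairs such that, taking the odd edges inside the intervals and the even edges outside, the resulting edge set $S$ satisfies, componentwise, $n' z + w(S) \ge \nicefrac{1}{2} w(T)$, where $z$ is a componentwise upper bound on the $y_j$.

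The next step is to identify the structure of $S$. Within each interval $[a_i, b_i]$ of edge-pairs we take a contiguous run of odd edges $e_{2a_i-1}, e_{2a_i+1}, \ldots$; together with the boundary even edges that are picked up outside the intervals, these assemble into a bounded number of vertex-disjoint paths in $T$ (in fact $S$ itself is a union of $\le 2k$ paths, since each interval contributes $O(1)$ paths and $n' \le k$). The crucial observation is that the ``error term'' $n' z$ is exactly $w$ evaluated on at most $k$ of the $y_{b_i}$ edges — i.e., on at most $2k$ specific edges of $T$ — each of which appears as a path endpoint. So I set $F$ to be the set of these error-absorbing edges (re-interpreted, if necessary, as short paths of length $\le 2$ so that after contraction $F$'s weight is added back exactly); then $F$ is a set of vertex-disjoint paths with $\card{F} \le 2k$, and $w(F) \ge n' z$ componentwise by the choice of $F$ as the heaviest such edges, or more simply $w(F) = \sum_i w(e_{b_i})$ and we use $y_{b_i} \le z$ to bound $n'z$ above. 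Hence in the iteration of \texttt{2}\ATSPApprox\ that guesses exactly this $F$, we have $G' = \mathrm{contract}_F(G)$, and $M := S \setminus F$ (or the image of $S$ under contraction) is a matching in $G'$ with $w'(M) + w(F) = w(S)$, so $w'(M) + w(F) \ge \nicefrac{1}{2} w(T) - n'z + n'z$; being careful with the exact accounting, $w'(M) + w(F) \ge \nicefrac{1}{2} w(T) - (\text{error} \le n'z)$ but the $n'z$ is cancelled by $w(F) \ge n'z$, giving $w'(M) \ge \nicefrac{1}{2} w(T) - w(F)$ and thus a matching in $G'$ of weight $\ge \nicefrac{1}{2} w'(T')$-ish — the point being that $M$ witnesses a matching in $G'$ whose weight, plus the contracted weight $w(F)$, is at least $\nicefrac{1}{2} w(T)$ in every objective.

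Then I invoke the FPRAS: \texttt{2}\kMMApprox$(G', \eps)$ returns, with overall success probability $\ge \nicefrac{1}{2}$, a set $\mathcal{M}$ of matchings that $(1-\eps)$-approximates the Pareto set of $k$-\MM\ on $G'$, hence contains some $M^\star$ with $w'(M^\star) \ge (1-\eps) w'(M)$ componentwise. Extending $M^\star$ arbitrarily to a Hamiltonian cycle $T'$ of $G'$ only increases weights (all weights are in $\uint^{2k}$, hence nonnegative), and by Proposition~\ref{prop:expanded_cycle} the expansion $T^{\mathrm{out}} = \mathrm{expand}_F(T')$ is a Hamiltonian cycle of $G$ with $w(T^{\mathrm{out}}) = w'(T') + w(F) \ge (1-\eps) w'(M) + w(F) \ge (1-\eps)(\nicefrac{1}{2} w(T) - w(F)) + w(F) \ge (\nicefrac{1}{2} - \eps) w(T)$, using $w(F) \ge 0$ and $w(T) \ge 0$. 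This is the desired approximation. The running time is polynomial because the outer loop has $|E|^{O(k)}$ iterations (constant $k$), contraction and expansion are polynomial, and \texttt{2}\kMMApprox\ runs in time polynomial in $|(V,E,w)| + \nicefrac{1}{\eps}$.

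The main obstacle I expect is the bookkeeping in the middle paragraph: matching up the $\le 2k$ ``error'' edges $e_{b_i}$ produced by Corollary~\ref{coro:combinatorial} with a legitimate set $F$ of vertex-disjoint \emph{paths} of total size $\le 2k$ on which contraction adds the weight back exactly, and checking that $S$ minus these edges really is a matching in the contracted graph $G'$ (one must verify that contracting $F$ does not merge two endpoints of an edge of $M$, i.e., that $M$ and $F$ interact correctly — this follows from $S \subseteq T$ and $T$ being a simple cycle, but it needs to be said). Everything else — the odd/even split, the nonnegativity arguments, and the FPRAS invocation — is routine once that combinatorial identification is in place.
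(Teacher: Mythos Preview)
Your overall strategy coincides with the paper's: apply the balancing lemma to the alternating edges of $T$, identify a small edge set $F$ to contract, run the matching FPRAS on $G'$, and expand back. The final chain $w(T^{\mathrm{out}}) \ge (1-\eps)\bigl(\tfrac12 w(T)-w(F)\bigr)+w(F)\ge(\tfrac12-\eps)w(T)$ is exactly what the paper does once the existence of a matching $M'$ in $G'$ with $w'(M')\ge\tfrac12 w(T)-w(F)$ is established.

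The gap is precisely where you flagged it, and your tentative choice of $F$ does not work. Writing $T=u_1,e_1,v_1,f_1,u_2,\dots$ and using Lemma~\ref{lem:combinatorial}, the set $S=\{f_{b_1},\dots,f_{b_{n'}}\}\cup\{e_j:j\in I\}\cup\{f_j:j\notin I\}$ has adjacencies at \emph{both} ends of each interval: at the right end $e_{b_i}$ and $f_{b_i}$ meet at $v_{b_i}$, and at the left end $f_{a_i-1}$ (or $f_{b_{i-1}}$) and $e_{a_i}$ meet at $u_{a_i}$. Taking $F$ to be only the ``error edges'' $f_{b_i}$ resolves the right-end adjacencies but not the left-end ones, so the contracted $S$ is still not a matching. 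Moreover $e_{a_i}$ and $f_{b_i}$ are in general not adjacent (only when $a_i=b_i$), so they do not form the ``short paths of length~$\le 2$'' you suggest. The paper takes $F=\{e_{a_1},f_{b_1},\dots,e_{a_{n'}},f_{b_{n'}}\}$ ($2n'\le 2k$ edges of $T$, hence a set of vertex-disjoint paths), and the core of the argument is a case analysis, using $a_1\le b_1<a_2\le\cdots$, showing that contracting this $F$ collapses every maximal path of $S$ to a single edge. Your weight bookkeeping is also tangled by mixing Corollary~\ref{coro:combinatorial} (error $n'z$) with the explicit edges of Lemma~\ref{lem:combinatorial}; the clean route is to use Lemma~\ref{lem:combinatorial} directly so that $F\subseteq S$ and $w(S)\ge\tfrac12 w(T)$, whence $w'(M')=w(S)-w(F)\ge\tfrac12 w(T)-w(F)$ immediately.
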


\begin{proof}
Let $G=(V,E,w)$ be an $\uint^{2k}$-labeled complete directed graph
with even $m=\card{V}$,
and let $T$ be some arbitrary Hamiltonian cycle in $G$.

\begin{claim}\label{claim:lemmaAppliedToT}
There is a set $F$ of vertex disjoint paths in $T$ 
with $\card{F} \leq 2k$ 
such that there is a matching $M'$ 
in the graph $(V',E',w') =\textrm{contract}_{F}(G)$
with $w'(M') \geq \frac{1}{2} w(T) - w(F) $.
\end{claim}

\begin{proof}
We apply Lemma \ref{lem:combinatorial} to the sequence of edge weights of $T$.
Having an even number of edges, we can write $T$ sequentially as
\begin{align*}
T = u_1,e_1,v_1,f_1,u_2,e_2,v_2,f_2,\dots,u_p,e_p,v_p,f_p,u_1
\end{align*}
where $u_i,v_i \in V$ and $e_i,f_i \in T$.

Since $w(e_i),w(f_i) \in \uint^{2k}$,
Lemma \ref{lem:combinatorial} shows that there exist
$k' \in \{0, \ldots, k\}$ and natural numbers 
$1 \le a_1 \le b_1 < a_2 \le b_2 \cdots < a_{k'} \le b_{k'} \le p$
such that for $I = \bigcup_{i \in \{1,\ldots,k'\}} \{a_i, a_i+1, \ldots, b_i\}$,
\begin{align}
w(f_{b_1}) + w(f_{b_2}) + \cdots + w(f_{b_{k'}}) + 
\sum_{i \in I} w(e_i) + \sum_{i \notin I} w(f_i) \;\;\ge\;\; 
\frac{1}{2} \sum_{i=1}^{p}(w(e_i) + w(f_i)) \label{eqn:lemma_application_TSP}.
\end{align}

Let 
$S = \{f_{b_1}, f_{b_2}, \dots, f_{b_{k'}}\}
\cup \{e_i \mid i \in I\}
\cup \{f_i \mid i \notin I\}$.
Observe that it is possible that adjacent edges are contained in $S$.
Figure \ref{fig:matching_in_cycle} gives an example.

\begin{figure}
\centering
\footnotesize
\begin{tikzpicture}[scale=1.2]
\draw[thick](0.2,0) -- node[above]{$e_{j}$} (1,0);
\draw[thick](1,0) -- node[above]{$f_{j}$}(2,0);
\draw[dotted](2,0)  -- node[above]{$e_{j+1}$}(3,0);
\draw[thick](3,0) -- node[above]{$f_{j+1}$}(4,0);
\draw[thick](4,0)  -- node[above]{$e_{j+2}$}(5,0);
\draw[dotted](5,0) -- node[above]{$f_{j+2}$}(6,0);
\draw[thick](6,0)  -- node[above]{$e_{j+3}$}(7,0);
\draw[thick](7,0) -- node[above]{$f_{j+3}$}(8,0);
\draw[thick](8,0)  -- node[above]{$e_{j+4}$}(9,0);
\draw[thick](9,0) -- node[above]{$f_{j+4}$}(10,0);
\draw[dotted](10,0)  -- node[above]{$e_{j+5}$}(11,0);
\draw[thick](11,0) -- node[above]{$f_{j+5}$}(11.8,0);
\draw[](12,0) node[right]{$\dots$};
\draw[](0,0) node[left]{$\dots$};
\foreach \x in {1,2,3,4,5,6,7,8,9,10,11}
\draw[fill=black] (\x,0) circle (1pt);
\draw[rounded corners=10pt] 
	(6,-1) -- 
	(8.1,-1) --
	(8.1,1) --
	(3.9,1) --
	(3.9,-1) --
	(6,-1);
\draw[rounded corners=10pt] 
	(0.2,-1) -- 
	(2.1,-1) --
	(2.1,1) --
	(0.2,1);
\draw[rounded corners=10pt] 
	(9,-1) -- 
	(10.1,-1) --
	(10.1,1) --
	(7.9,1) --
	(7.9,-1) --
	(9,-1);
\draw[](2,1.2) node[left]{$b_i$};
\draw[](4,1.2) node[right]{$a_{i+1}$};
\draw[](8,1.2) node[left]{$b_{i+1}$};
\draw[](8,1.2) node[right]{$a_{i+2}$};
\draw[](10,1.2) node[left]{$b_{i+2}$};
\end{tikzpicture}
\caption{\footnotesize Some part of the cycle $T$, 
where $S \subseteq T$ contains the depicted edges
and is partially defined by 
$b_i=j$, $a_{i+1}=j+2$, $b_{i+1}=j+3$, and $a_{i+2}=b_{i+2}=j+4$.}
\label{fig:matching_in_cycle}
\end{figure}
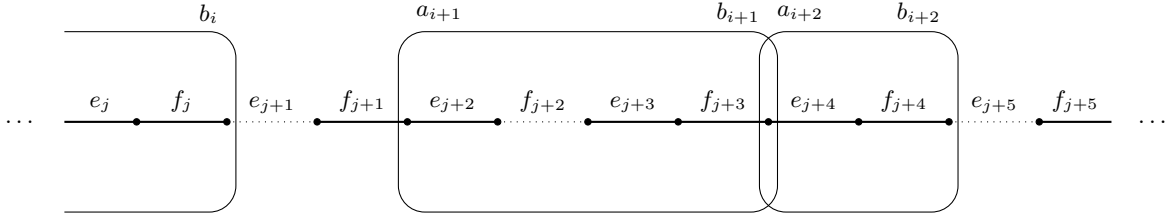

Observe that for $1 \le j \le p$ and $f_{0}=f_p$ the following holds.
\begin{eqnarray}
    f_{j-1},e_{j} \in S &\;\;\iff\;\;& \exists i \in \{1,\ldots,k'\}, j=a_i \label{eqn_cases_a}\\
    e_j,f_j \in S &\;\;\iff\;\;& \exists i \in \{1,\ldots,k'\}, j=b_i \label{eqn_cases_b}
\end{eqnarray}

Let $F=\{e_{a_1},f_{b_1},e_{a_2},f_{b_2},\dots,e_{a_{k'}},f_{b_{k'}}\}$
and note that $\card{F}=2k'$.
We argue that contracting $F$ 
will transform any path in $S$ into a single edge
such that the resulting edge set is a matching:

Suppose $S$ contains some path $P=\{e_r,f_r,e_{r+1},f_{r+1},\dots,e_s,f_s\}$, where
we assume $P$ to be maximal (i.e., $f_{r-1},e_{s+1} \notin S$).
From (\ref{eqn_cases_a}), (\ref{eqn_cases_b}), and $a_1 \le b_1 < a_2 \le b_2 \cdots < a_{k'} \le b_{k'}$
we can draw the following conclusions:
\begin{itemize}
    \item $e_r,f_r \in P$ yields $r=b_i$ for some $1 \leq i \leq k'$
    \item $f_r,e_{r+1},f_{r+1} \in P$ yields $r+1=a_{i+1}=b_{i+1}$
    \item $f_{r+1},e_{r+2},f_{r+2} \in P$ yields $r+2=a_{i+2}=b_{i+2}$\\[1.5mm]
    \hspace*{35mm} $\vdots$
    \item $f_{s-1},e_s,f_s \in P$ yields $s=a_{s-r+1}=b_{s-r+1}$
\end{itemize}

Note that $e_r \notin F$, since otherwise $e_{b_i} \in F$,
hence $a_i=r$, and by (\ref{eqn_cases_a}), $f_{r-1} \in S$, which contradicts the maximality of $P$.
Therefore, contracting $\{f_{b_i},e_{a_{i+1}},f_{b_{i+1}},\dots,f_{b_j}\} \subseteq F$
transforms $P$ into the single edge $e_r$.
A similar argumentation shows the same result
for paths that start with some edge $f_r$ or end with some edge $e_s$.
Hence, contracting $F$
transforms every path in $S$ into a single edge,
and $M' =\textrm{contract}_{F}(S)$
is a matching in the graph $(V',E',w')=\textrm{contract}_{F}(G)$.
We further obtain
\begin{align*}
w'(M')
&= w(S) - w(F)\\
&= w(f_{b_1}) + w(f_{b_2}) + \cdots + w(f_{b_{k'}}) 
+ \sum_{i \in I} w(e_i) + \sum_{i \notin I} w(f_i) - w(F)\\
&\stackrel{(\ref{eqn:lemma_application_TSP})}{\geq}
\frac{1}{2} \sum_{i=1}^{p}(w(e_i) + w(f_i)) - w(F)\\
&= \frac{1}{2} w(T) - w(F)
\end{align*}
which proves the claim.
\end{proof}

We fix the iteration of \texttt{2}\ATSPApprox\
where the algorithm chooses $F$ as in the claim.
By Claim \ref{claim:lemmaAppliedToT}
we know that there is a matching $M'$ of $G'=(V',E',w')$ with 
\begin{align}
w'(M') \geq \frac{1}{2} w(T) - w(F)\label{eqn:matching_in_g'}.
\end{align}
Hence, with probability at least $\nicefrac{1}{2}$
the set ${\cal M}$ contains some matching $M$ of $G'$
such that
\begin{align*}
w'(M)
&\geq (1-\eps)w'(M')\\
&\stackrel{(\ref{eqn:matching_in_g'})}{\geq} (1-\eps)\left( \frac{1}{2} w(T) -
w(F)\right).
\end{align*}
We extend $M$ to some Hamiltonian cycle $T'$ of $G'$ in an arbitrary way
without losing weights.
By Proposition~\ref{prop:expanded_cycle}
we can expand $T'$ with $F$ and obtain
a Hamiltonian cycle $\tilde{T}$ in $G$ with
\begin{align*}
w(\tilde{T}) 
&= w'(T') + w(F)\\
&\geq w'(M) + w(F)\\
&\geq (1-\eps)\left(\frac{1}{2} w(T) - w(F)\right) + w(F)\\
&\geq (1-\eps) \frac{1}{2} w(T) + \eps w(F)\\
&\geq (1-\eps) \frac{1}{2} w(T).
\end{align*}
Moreover, the running time of every operation of the algorithm,
including the execution of the randomized maximum matching algorithm,
and the number of iterations of the loops are
polynomial in the length of the input and $\nicefrac{1}{\eps}$,
which completes the proof of the lemma.
\end{proof}

\begin{theorem} \label{thm_approx_maxtsp}
$k$-\ATSP\ is randomized $\nicefrac{1}{2}$-approximable.
\end{theorem}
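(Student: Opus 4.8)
The statement strengthens the $(\nicefrac{1}{2}-\eps)$-guarantee of Lemma~\ref{lem:tsp_asymp_fptas} to an exact factor $\nicefrac{1}{2}$, so the plan is to run \texttt{2}\ATSPApprox\ (in a slightly modified form) with a very small error parameter and then exploit that all edge weights are natural numbers. Two incidental assumptions of Lemma~\ref{lem:tsp_asymp_fptas} must be disposed of first. If the number of objectives is odd, I would append an identically-zero objective; this changes nothing and makes the dimension $2k$ for some $k$. If $\card{V}$ is odd, I would guess (i.e.\ try all choices for) one edge $e$ of the target Hamiltonian cycle $T$ and pass to $\mathrm{contract}_{\{e\}}(G)$, which has an even number of vertices; by Proposition~\ref{prop:expanded_cycle} a Hamiltonian cycle of weight $\ge\nicefrac{1}{2}(w(T)-w(e))$ in the contracted graph expands to a Hamiltonian cycle of weight $\ge\nicefrac{1}{2}(w(T)-w(e))+w(e)\ge\nicefrac{1}{2}w(T)$ in $G$, so it suffices to treat the case of an even number of vertices and an even number $2k$ of objectives.

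For the main step, put $W=\card{V}\cdot\max\{w_i(e)\mid e\in E,\, 1\le i\le 2k\}$ (if $W=0$, output any single Hamiltonian cycle), so that $w_i(H)\le W$ for every Hamiltonian cycle $H$ and every coordinate $i$. I would run the algorithm with error parameter $\eps=\nicefrac{1}{(2W+1)}$. Fix an arbitrary Hamiltonian cycle $T$; by Lemma~\ref{lem:tsp_asymp_fptas} the output contains, with probability at least $\nicefrac{1}{2}$, a Hamiltonian cycle $\tilde{T}$ with $w_i(\tilde{T})\ge(\nicefrac{1}{2}-\eps)w_i(T)$ for all $i$. Since $\eps\cdot w_i(T)\le\eps W<\nicefrac{1}{2}$, this gives $w_i(\tilde{T})>\nicefrac{1}{2}w_i(T)-\nicefrac{1}{2}$, and as $w_i(\tilde{T})$ is a natural number it follows that $w_i(\tilde{T})\ge\lceil\nicefrac{1}{2}w_i(T)\rceil\ge\nicefrac{1}{2}w_i(T)$ (checking $w_i(T)$ even and odd separately). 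Hence $\tilde{T}$ is a genuine $\nicefrac{1}{2}$-approximation of $T$; and since the random choices in the proof of Lemma~\ref{lem:tsp_asymp_fptas} do not depend on $T$, with probability at least $\nicefrac{1}{2}$ the output $\nicefrac{1}{2}$-approximates every Hamiltonian cycle at once, i.e.\ it is a $\nicefrac{1}{2}$-approximate Pareto set.

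The real difficulty — and the point where \texttt{2}\ATSPApprox\ must be genuinely \emph{modified}, not merely re-parametrised — is the running time. Lemma~\ref{lem:tsp_asymp_fptas} bounds it only by a polynomial in $|(V,E,w)|+\nicefrac{1}{\eps}$, and for $\eps=\nicefrac{1}{(2W+1)}$ the term $\nicefrac{1}{\eps}$ is merely pseudo-polynomial, hence exponential in the input length once weights are written in binary; if the weights happen to be polynomially bounded, the argument above goes through as it stands. The sole source of the $\eps$-loss in \texttt{2}\ATSPApprox\ is the call to the matching FPRAS \kMMApprox, whose additive error is of order $\eps\cdot w'(M)$ and therefore scales with the possibly huge weights, and it is not covered by contracting the constantly many edges in $F$. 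To eliminate it I would replace that call by an \emph{exact} procedure, using the observation behind Claim~\ref{claim:lemmaAppliedToT} that one does not need the entire (exponential-size) Pareto set of matchings of $G'$ but only a single matching dominating the a~priori unknown target $\nicefrac{1}{2}w(T)-w(F)$; such a matching should be reachable by a modest amount of additional guessing (a coordinatewise-scaled approximation of the target has only polynomially many admissible values when $k$ is fixed) combined with an exact, pseudo-polynomial (randomized) routine for finding a perfect matching that satisfies $2k$ prescribed weight bounds, in the spirit of the exact-matching algorithm. Getting this matching subroutine to run in polynomial time for a fixed number of objectives while still hitting the correct target is the step I expect to be the main obstacle; once it is available, the integrality argument of the previous paragraph turns the resulting error-free balancing of $T$ directly into the claimed $\nicefrac{1}{2}$-approximation.
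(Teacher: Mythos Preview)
Your reduction of the odd-$k$ and odd-$\card V$ cases is fine, and your integrality argument in the second paragraph is correct \emph{given} a polynomial-time $(\nicefrac12-\eps)$-approximation with $\eps=\nicefrac{1}{(2W+1)}$. But as you yourself observe, this value of $\eps$ makes the running-time bound of Lemma~\ref{lem:tsp_asymp_fptas} only pseudo-polynomial, and the third paragraph does not close this gap: you sketch a plan to replace \kMMApprox\ by an exact multi-criteria matching routine, yet concede that making such a routine run in polynomial time ``is the step I expect to be the main obstacle.'' That is precisely the content of the theorem, so the proposal is incomplete as it stands. (Exact matching subject to several weight constraints is not known to be in randomized polynomial time for general weights, so this is a genuine obstacle, not a routine detail.)

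The paper avoids this difficulty entirely by a different and much simpler idea: rather than shrinking $\eps$, it keeps $\eps=\nicefrac{1}{m}$ (so the FPRAS runs in honest polynomial time) and compensates for the $\eps$-loss by an extra layer of guessing. Before calling \texttt{2}\ATSPApprox, one guesses a set $F\subseteq T$ of at most $2k$ edges that contains, for each objective $i$, an edge of maximum $w_i$-weight on $T$; then $w(F)\ge\nicefrac{1}{m}\,w(T)$. Contracting $F$ and applying Lemma~\ref{lem:tsp_asymp_fptas} to the contracted graph with $\eps=\nicefrac{1}{m}$ yields a cycle $\tilde T$ with $w'(\tilde T)\ge(1-\nicefrac{1}{m})\cdot\nicefrac12\,w'(T')$; expanding by $F$ gives
\[
w(\hat T)\;\ge\;\tfrac12 w(T)+\tfrac12\bigl(w(F)-\tfrac{1}{m}w'(T')\bigr)\;\ge\;\tfrac12 w(T),
\]
since $w(F)\ge\nicefrac{1}{m}w(T)\ge\nicefrac{1}{m}w'(T')$. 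The outer guess ranges over $O(\card E^{2k})$ candidates for $F$, which is polynomial for fixed $k$. Thus the $\eps$-loss is absorbed by the weight of the pre-contracted heavy edges rather than by integrality, and no exact matching subroutine is needed.
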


\begin{proof}
Let $G=(V,E,w)$ be an $\uint^{2k}$-labeled complete directed graph
with even $m=\card{V}$,
and let $T$ be some arbitrary Hamiltonian cycle in $G$.
The proof can easily be extended to graphs
with an odd number of vertices or objectives.

For each $1 \leq i \leq 2k$ we choose an $f_i \in T$ 
with $w(f_i) \geq \frac{1}{m} w(T)$
(i.e., $f_i$ is a heaviest edge of $T$ with respect to component $i$).
We let $F \subseteq T$ be a smallest set with even cardinality containing all the
$f_i$.
We get $\card{F}\leq 2k$ and
\begin{align} w(F) \geq \frac{1}{m} w(T). \label{eqn:0} \end{align}
$F$ is a set of vertex disjoint paths in $T$
and hence can be used to contract edges in $G$ and $T$.
Let $G' = (V',E',w') = \textrm{contract}_{F}(G)$ and $T' = \textrm{contract}_{F}(T)$.
Clearly, $T'$ is a Hamiltonian cycle in $G'$.
Moreover, $G'$ has an even number of vertices.
By Lemma~\ref{lem:tsp_asymp_fptas},
we can find in polynomial time a Hamiltonian cycle $\tilde{T}$ in $G'$
such that $w'(\tilde{T}) \geq (1-\eps)\frac{1}{2}w'(T')$, where $\eps=\nicefrac{1}{m}$.
Moreover, we can expand $\tilde{T}$ with $F$
to obtain a Hamiltonian cycle $\hat{T}$ in $G$ such that
\begin{align*}
w(\hat{T})
&= w'(\tilde{T}) + w(F)\\
&\geq (1-\eps) \frac{1}{2} w'(T') + w(F)\\
&= \left(\frac{1}{2} w'(T') +  \frac{1}{2} w(F)\right) + 
 \frac{1}{2} w(F) - \frac{\eps}{2} w'(T') \\
&= \frac{1}{2} w(T) + \frac{1}{2} (w(F) - \eps w'(T'))\\
&\stackrel{(*)}{\geq} \frac{1}{2} w(T),
\end{align*}
where ($*$) follows from
\begin{align*}
w(F) - \eps w'(T')
&= w(F) - \frac{1}{m} w'(T')\\
&\geq w(F) - \frac{1}{m} w(T)\\
&\stackrel{\eqref{eqn:0}}{\geq} \frac{1}{m} w(T) - \frac{1}{m} w(T)\\
&= 0.
\end{align*}

Note that, although we do not know the set $F$ of heaviest edges in the Hamiltonian cycle,
we can simply try all possible sets of heaviest edges,
since the number of objectives $2k$ is constant.
\end{proof}


\subsection{\texorpdfstring{$\boldsymbol{k}$}{k}-Objective Maximum Weighted Satisfiability} \label{sec_sat}

\paragraph{Definitions.}
We consider formulas over a finite set of propositional variables $V$.
A \emph{literal} is a propositional variable $v\in V$ 
or its negation $\overline{v}$,
a \emph{clause} is a finite, nonempty set of literals,
and a \emph{formula in conjunctive normal form} (\textbf{CNF}, for short)
is a finite set of clauses.
A \emph{truth assignment} is a mapping $I \colon V \to \{0,1\}$.
For some $v \in V$,
we say that $I$ 
\emph{satisfies the literal $v$} if $I(v)=1$,
and $I$ \emph{satisfies the literal $\overline{v}$} if $I(v)=0$.
We further say that $I$ \emph{satisfies the clause $C$} and write $I(C)=1$ 
if there is some literal $l \in C$ that is satisfied by $I$,
and $I$ \emph{satisfies a formula in CNF} if $I$ satisfies all of its clauses.
For a set of clauses $\hat{H}$ and a variable $v$ let $\hat{H}[v] =
\{C \in \hat{H} \mid v \in C\}$ be the set of clauses 
that are satisfied
if this variable
is set to one, and analogously $\hat{H}[\oli{v}] =
\{C \in \hat{H} \mid \oli{v} \in C\}$ be the set of clauses
that are satisfied
if this variable
is set to zero.

Given a formula in CNF and a $k$-objective weight function that maps each clause
to a $k$-objective weight, our goal is to find truth assignments
that maximize the sum of the weights of all satisfied clauses.

\begin{quote}
    \begin{tabbing}
        {\bf $\boldsymbol{k}$-Objective Maximum Weighted Satisfiability
        (\kMaxSATbf)}\\
        Instance: \= Formula $H$ in CNF over a set of variables $V$,
        weight function $w \colon H \to \uint^k$\\
        Solution: \> Truth assignment $I \colon V \to \{0,1\}$\\
        Weight: \> Sum of the weights of all clauses satisfied by $I$, i.\,e.,
         $w(I) = \Sum_{\substack{C \in H\\I(C) = 1}} w(C)$
    \end{tabbing}
\end{quote}

\paragraph{High-Level Explanation of the Algorithm.}
We apply the balancing results to \kMaxSAT.
For a given formula $H$ in CNF over the variables $V$, the strategy is as follows:
Start with a list of the variables $V$ and guess a partition of this list into
$2k$ consecutive intervals.
Assign $1$ to the variables in every second interval and $0$ to the remaining variables.
The balancing results assure the existence of a partition that yields an assignment
whose weights are approximately one half of the \emph{total} weights of $H$,
up to an error induced by the variables at the boundaries of the partition.
The error can be removed by first guessing a satisfying assignment for several
influential variables $V^0$ of the formula.
This results in a $\nicefrac{1}{2}$-approximation for \kMaxSAT.

\paragraph{Approximation Algorithm.}
We show that the following algorithm 
computes a $\nicefrac{1}{2}$-approx\-ima\-tion for \kMaxSAT[2k].

\vspace{5mm}

\begin{algorithm}[H]
    \algosettings
    \caption{\textbf{Algorithm}: \texttt{\kWSATApprox($H,w$)}}
    \Input{Formula $H$ in CNF over the variables $V=\{v_1,\dots,v_m\}$,
     $2k$-objective weight function $w \colon H \to \uint^{2k}$}
    \Output{Set of truth assignments $I \colon V \to \{0,1\}$}
    \BlankLine

    \ForEach{$V^0 \subseteq V$ with $\card{V^0} \le (2k)^2$}{
        let $I(v) := 0$ for all $v \in V^0$\;
        $G := \{C \in H \mid \neg \exists v \in V^0\,\,(\oli{v} \in C)\}$\;
        $V^1 := \{v \in V \setminus V^0 \mid 
        2k \cdot w(G[\oli{v}])\not\leq w(H \setminus G)\}$\;
        set $I(v) := 1$ for all $v \in V^1$\;
        $V' := V \setminus (V^0 \cup V^1)$\;
        \ForEach{$a_1,b_1,a_2,b_2,\dots,a_k,b_k \in \{i \mid v_i \in V'\}$}{
         \ForEach{$v_i \in V'$}{
          \lIf{$\exists j (a_j \le i \le b_j)$}{$I(v_i):=1$}
          \lElse{$I(v_i):=0$}
         }
         output $I$
        }
    }
\end{algorithm}

\begin{theorem}
    \kMaxSAT\ is $\nicefrac{1}{2}$-approximable.
\end{theorem}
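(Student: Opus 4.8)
The plan is to show that for every truth assignment $I^*\colon V\to\{0,1\}$ the output of \kWSATApprox\ contains an assignment $I$ with $w(I)\ge\frac12 w(I^*)$ in every coordinate; by the definition of an approximate Pareto set this proves the theorem for an even number of objectives, and the general case follows by padding the weight function with one identically‑zero coordinate. So fix $I^*$, assume $2k$ objectives, and let $v_1,\dots,v_m$ be the variables in the order used by the algorithm.

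\emph{Choosing the guess $V^0$.} I would build $V^0\subseteq\{v\mid I^*(v)=0\}$ greedily, one objective at a time. Writing $G=\{C\in H\mid\neg\exists u\in V^0\,(\oli u\in C)\}$ for the current survivor set, process $i=1,\dots,2k$ and, while there is a variable $v\notin V^0$ with $I^*(v)=0$ and $2k\cdot w_i(G[\oli v])>w_i(H\setminus G)$, add to $V^0$ one such $v$ maximizing $w_i(G[\oli v])$. The key point is that for a fixed $i$ the marginal weights $w_i(G[\oli v])$ of the successive picks are non‑increasing (shrinking $G$ only shrinks them, and we remove the current maximum each time) while each pick increases $w_i(H\setminus G)$ by exactly its own marginal; a short computation then shows we make at most $2k$ picks for objective $i$, so $\card{V^0}\le(2k)^2$ and the algorithm really tries this $V^0$. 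Moreover, for this $V^0$ the set $V^1$ computed by the algorithm satisfies $V^1\subseteq\{v\mid I^*(v)=1\}$, because any $v\notin V^0$ with $I^*(v)=0$ already satisfied $2k\cdot w_i(G[\oli v])\le w_i(H\setminus G)$ at the end of stage $i$, and this survives the later (only shrinking) stages. Hence the partial assignment fixed by the algorithm on $V^0\cup V^1$ agrees with $I^*$; in particular $H\setminus G$ and $G\setminus\hat H$, where $\hat H=\{C\in G\mid\neg\exists u\in V^1\,(u\in C)\}$, are satisfied both by $I^*$ and by every output of this branch, and $I^*$ satisfies no \emph{dead} clause of $\hat H$, i.e.\ no $C\in\hat H$ all of whose literals lie in $\{u\mid u\in V^0\}\cup\{\oli u\mid u\in V^1\}$.

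\emph{Balancing on $V'$.} Let $V'=V\setminus(V^0\cup V^1)=\{u_1,\dots,u_p\}$ in the original order. Charge every non‑dead clause $C\in\hat H$ to the smallest $t$ with $u_t$ or $\oli{u_t}$ in $C$, putting $w(C)$ into $x_t\in\uint^{2k}$ if $u_t\in C$ and into $y_t\in\uint^{2k}$ otherwise. Then $\Sum_t(x_t+y_t)=w(\hat H)-w(\mathrm{dead})$, and since $u_t\notin V^1$ we get $y_t\le w(G[\oli{u_t}])\le z:=\lfloor\tfrac1{2k}w(H\setminus G)\rfloor$. Corollary~\ref{coro:combinatorial} (with $n=k$) now yields $n'\le k$ disjoint nonempty intervals with union $I'\subseteq\{1,\dots,p\}$ such that $n'z+\Sum_{t\in I'}x_t+\Sum_{t\notin I'}y_t\ge\frac12\Sum_t(x_t+y_t)$; translated back to original indices, $I'$ is realized by one of the tuples $a_1,b_1,\dots,a_k,b_k$ the algorithm enumerates (use $n'$ pairs for the real intervals and place the remaining ones inside a used one). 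On the corresponding output $I$, which sets $u_t=1$ iff $t\in I'$, every clause charged to an $x_t$ with $t\in I'$ or to a $y_t$ with $t\notin I'$ is satisfied, and these clauses are distinct, so the weight $I$ collects from $\hat H$ is at least $\Sum_{t\in I'}x_t+\Sum_{t\notin I'}y_t$.

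\emph{Putting it together.} Using $n'z\le kz\le\tfrac12 w(H\setminus G)$ and $w(I)=w(H\setminus G)+w(G\setminus\hat H)+w_{\hat H}(I)$,
\[
  w(I)\ \ge\ \tfrac12 w(H\setminus G)+w(G\setminus\hat H)+\tfrac12\bigl(w(\hat H)-w(\mathrm{dead})\bigr)\ \ge\ \tfrac12 w(I^*)+\tfrac12 w(G\setminus\hat H)\ \ge\ \tfrac12 w(I^*),
\]
where the second inequality uses $w(I^*)\le w(H\setminus G)+w(G\setminus\hat H)+w(\hat H)-w(\mathrm{dead})$ (all satisfied $H\setminus G$ and $G\setminus\hat H$ clauses, no dead clauses), and everything holds componentwise. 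Finally there are at most $(m+1)^{(2k)^2}$ choices of $V^0$ and at most $m^{2k}$ interval‑tuples, and each branch runs in polynomial time, so for fixed $k$ the algorithm is polynomial, which proves the theorem. I expect the \textbf{main obstacle} to be the second step: arguing that $(2k)^2$ guesses suffice to force $V^1\subseteq\{v\mid I^*(v)=1\}$, since this is exactly what absorbs both the rounding error $n'z$ of Corollary~\ref{coro:combinatorial} and the weight of the dead clauses. The argument hinges on the monotonicity of the greedy marginals, and getting that bookkeeping right — with the factor $2k$ in the algorithm's threshold calibrated against $n'\le k$ — is the crux.
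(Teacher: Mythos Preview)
Your proof is correct and follows essentially the paper's approach: greedily build $V^0\subseteq\{v\mid I^*(v)=0\}$ so that the algorithm's $V^1$ is forced into $\{v\mid I^*(v)=1\}$, apply Corollary~\ref{coro:combinatorial} on $V'$, and combine. Your sequential-by-objective greedy for $V^0$ and your first-variable charging for the $x_t,y_t$ differ only cosmetically from the paper's cyclic greedy and fractional charging $w(C)/\card{(C\cap L')}$; both variants deliver the same bounds $\card{V^0}\le(2k)^2$ and $y_t\le\tfrac{1}{2k}w(H\setminus G)$, so the arguments are interchangeable.
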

\begin{proof}
In the following, we assume without loss of generality that the number of objectives $2k$ is even.
We show that the approximation is realized by the algorithm \kWSATApprox.
First note that this algorithm runs in polynomial time since $k$ is constant.
For the correctness, let $(H,w)$ be the input where $H$ is a formula over the
variables $V=\{v_1,\dots,v_m\}$ and $w \colon H \to \uint^{2k}$ is the
$2k$-objective weight function.
Let $I_o\colon V \rightarrow \{0,1\}$ be an optimal truth assignment.
We show that there is an iteration of the loops of \kWSATApprox($H,w$)
that outputs a truth assignment $I$ such that $w(I) \ge w(I_o)/2$.
First we show that there is an iteration of the first loop that uses a suitable
set $V^0$.

\begin{claim}\label{claimv0}
There is some set $V^0 \subseteq \{v \in V\mid I_o(v) = 0\}$ with $\card{V^0}
\le (2k)^2$
such that for
$G = \{C \in H \mid \neg \exists v \in V^0\,\,(\oli{v} \in C)\}$ and any
$v \in V \setminus V^0$ it holds that
\begin{align*}
2k \cdot w(G[\oli{v}]) \not\leq w(H \setminus G) \qquad \Longrightarrow \qquad
I_o(v)=1.
\end{align*}
\end{claim}
\begin{proof}
As a special case, if $\card\{v \in V
\mid I_o(v) = 0\} < (2k)^2$, the assertion obviously holds for $V^0 = \{v \in V \mid I_o(v)
= 0\}$, since
$I_o(v)=1$ for all $v \in V\setminus V^0$.

Otherwise, let $V^0 =
\{u_{2kt+r} \mid r=1,2,\dots,2k$ and $ t=0,1,\dots,2k-1\}$,
where the $u_{2kt+r}\in V$ are
defined inductively in the following way:
\begin{itemize}[noitemsep]
\item[(IB)] $H_0 := H$
\item[(IS)] $2kt+r-1 \to 2kt+r$:
\begin{itemize}[noitemsep]
\item choose $v \in V \setminus \{u_1,\dots,u_{2kt+r-1}\}$ such that $I_o(v) = 0$ and $w_r(H_{2kt+r-1}[\oli{v}])$ is maximal
\item $u_{2kt+r} := v$
\item $H_{2kt+r} := H_{2kt+r-1} \setminus H_{2kt+r-1}[\oli{v}]$
\item $\alpha_{2kt+r} := w(H_{2kt+r-1}[\oli{v}])$.
\end{itemize}
\end{itemize}

We now show that the stated implication holds, so let
$v \in V \setminus V^0$ and $j \in \{1,2,\dots,2k\}$ such that
$2k \cdot w_j(G[\oli{v}]) > w_j(H \setminus G)$. Because the union $\bigcup_{i=1}^{4k^2}
H_{i-1}[\oli{u_i}] = H \setminus G$ is disjoint, we get
\begin{align*}
w(H \setminus G) = \sum_{r=1}^{2k}\sum_{t=0}^{2k-1} \alpha_{2kt+r} \ge
\sum_{t=0}^{2k-1} \alpha_{2kt+j}
\intertext{and thus}
w_j(G[\oli{v}]) > \sum_{t=0}^{2k-1} \frac{(\alpha_{2kt+j})_j}{2k}.
\end{align*}
Hence, by a pigeonhole argument, there must be some $t \in \{0,1,\dots,2k-1\}$ such that
$w_j(G[\oli{v}]) > (\alpha_{2kt+j})_j$. But since $G \subseteq H_{2kt+j-1}$
and thus even $w_j(H_{2kt+j-1}[\oli{v}]) > w_j(G[\oli{v}]) >
(\alpha_{2kt+j})_j$, the only reason we did not choose $v$ in the iteration
$2kt+j$ (or even earlier) is that $I_o(v) = 1$.
\end{proof}

We choose the iteration of the algorithm where $V^0$ equals the set
whose existence is guaranteed
by Claim~\ref{claimv0}. Furthermore let $G$ and $V^1$ be defined as in the
algorithm
and observe that by the claim it holds that $I_o(v)=1$ for all $v \in V^1$.
Since $I_o(v) = 0$ for all $v \in V^0$, the truth assignment $I$ defined
in the algorithm coincides with $I_o$ on $V^0 \cup V^1$.

Let further $V' = V \setminus (V^0\cup V^1)$ and 
$G' = \{ C \in G \mid \neg \exists v \in V^0\,\,(\oli{v} \in C) \land \neg
\exists v \in V^1\,\,(v \in C) \land \exists v \in V'\,\, (v \in C \lor
\oli{v} \in C)\}$ be the set of clauses that are
not yet satisfied by $I$ but that could be satisfied by further extending
$I$.

Now we apply the balancing result. Let $L' = V' \cup \{\oli{v} \mid v \in V'\}$.
For $v_i \in V'$ let
\begin{align*}
x_i&=\sum_{C \in G'[v_i]} \frac{w(C)}{\card{(C \cap L')}}
&\mbox{and}&
&y_i&=\sum_{C \in G'[\oli{v_i}]} \frac{w(C)}{\card{(C \cap L')}},
\end{align*}
and for $v_i \in V^0\cup V^1$ let
$$x_i = y_i = 0.$$
It holds that
\begin{align*}
\sum_{v_i \in V} x_i + y_i = \sum_{v_i \in V'} x_i + y_i = w(G').
\end{align*}
Note that for all $v_i \in V'$, we have the bound
$y_i \le w(G'[\oli{v_i}]) \le w(G[\oli{v_i}]) \le \frac{1}{2k} w(H \setminus G)$
because of the definition of $V'$ and $V^1$.
Hence, for all $v_i \in V$,
$$y_i \le \frac{1}{2k} w(H \setminus G).$$
If we scale all values $x_i$ and $y_i$ to natural numbers,
then by Corollary~\ref{coro:combinatorial}, there exist $k' \le k$ disjoint,
nonempty intervals $J_1, \ldots, J_{k'} \subseteq \{1, \ldots, m\}$ such
that for $J = J_1 \cup \dots \cup J_{k'}$ it holds that
\begin{align*}
\sum_{i \in J} x_i + \sum_{i \notin J} y_i \ge
\frac{1}{2} w(G') - k' \frac{1}{2k} w(H \setminus G) \ge \frac{1}{2}(w(G') -
w(H \setminus G)).
\end{align*}
The algorithm tries all combinations of $k$ (possibly empty) intervals
$J_1 = [a_1,b_1], \ldots, J_k = [a_k,b_k]$.
In particular, it will test the combination of the $k'$
nonempty intervals mentioned in Corollary~\ref{coro:combinatorial}.
For $I$ being the truth assignment generated in this iteration it holds that
\begin{align}\label{eqn227736}
w(\{C \in G' \mid I(C) = 1\}) \ge
\sum_{i \in J} x_i + \sum_{i \notin J} y_i \ge \frac{1}{2}(w(G') -w(H \setminus G)).
\end{align}
Furthermore, since $I$ and $I_o$ coincide on $V \setminus V'$, we have
\begin{align}
w(\{C \in H \setminus G' \mid I(C) = 1\}) &=
w(\{C \in H \setminus G' \mid I_o(C) = 1\})\label{eqn2343}\\
&\ge w(\{C \in H \setminus G \mid I_o(C) = 1\})\notag\\
& = w(\{H \setminus G\}).\label{eqn33438}
\end{align}
Thus we finally obtain
\begin{align*}
w(I) &\;=\; w(\{C \in H \setminus G' \mid I(C) = 1\}) + w(\{C \in G'\mid I(C) = 1\})\\
&\stackrel{\eqref{eqn227736}}{\ge} w(\{C \in H \setminus G' \mid I(C) = 1\}) +
\tfrac{1}{2}(w(G') - w(H \setminus G))\\
&\stackrel{\eqref{eqn2343}}{=} w(\{C \in H \setminus G' \mid I_o(C) = 1\}) +
\tfrac{1}{2}(w(G') - w(H \setminus G))\\
&\stackrel{\eqref{eqn33438}}{\ge} \tfrac{1}{2} w(\{C \in H \setminus G' \mid I_o(C) = 1\}) +
\tfrac{1}{2}w(G')\\
&\; \ge\; \tfrac{1}{2}w(I_o).\qedhere
\end{align*}
\end{proof}


\bibliographystyle{alpha}

\begin{thebibliography}{KLSS05}

\bibitem[AW02]{AW02}
T.~Asano and D.~P. Williamson.
\newblock Improved approximation algorithms for {MAX} {SAT}.
\newblock {\em Journal of Algorithms}, 42(1):173--202, 2002.

\bibitem[BMP08]{BMP08}
M.~Bl{\"a}ser, B.~Manthey, and O.~Putz.
\newblock Approximating multi-criteria {M}ax-{TSP}.
\newblock In {\em ESA}, pages 185--197, 2008.

\bibitem[EG00]{eg00}
M.~Ehrgott and X.~Gandibleux.
\newblock A survey and annotated bibliography of multiobjective combinatorial
  optimization.
\newblock {\em OR Spectrum}, 22(4):425–460, 2000.

\bibitem[Ehr05]{ehr05}
M.~Ehrgott.
\newblock {\em Multicriteria Optimization}.
\newblock Springer Verlag, 2005.

\bibitem[EK01]{EK01}
L.~Engebretsen and M.~Karpinski.
\newblock Approximation hardness of {TSP} with bounded metrics.
\newblock In {\em ICALP '01: Proceedings of the 28th International Colloquium
  on Automata, Languages and Programming}, pages 201--212, London, UK, 2001.
  Springer-Verlag.

\bibitem[FNW79]{FNW79}
M.~L. Fisher, G.~L. Nemhauser, and L.~A. Wolsey.
\newblock An analysis of approximations for finding a maximum weight
  {H}amiltonian circuit.
\newblock {\em Operations Research}, 27(4):799--809, 1979.

\bibitem[GW94]{GW94}
M.~X. Goemans and D.~P. Williamson.
\newblock New 3/4-approximation algorithms for the maximum satisfiability
  problem.
\newblock {\em SIAM Journal on Discrete Mathematics}, 7(4):656--666, 1994.

\bibitem[GW95]{GW95}
M.~X. Goemans and D.~P. Williamson.
\newblock Improved approximation algorithms for maximum cut and satisfiability
  problems using semidefinite programming.
\newblock {\em Journal of the ACM}, 42(6):1115--1145, 1995.

\bibitem[H{\aa}s97]{Has97}
J.~H{\aa}stad.
\newblock Some optimal inapproximability results.
\newblock In {\em STOC}, pages 1--10, 1997.

\bibitem[Joh74]{Joh74}
D.~S. Johnson.
\newblock Approximation algorithms for combinatorial problems.
\newblock {\em Journal of Computer System Sciences}, 9(3):256--278, 1974.

\bibitem[KLSS05]{KLS+05}
H.~Kaplan, M.~Lewenstein, N.~Shafrir, and M.~Sviridenko.
\newblock Approximation algorithms for asymmetric {TSP} by decomposing directed
  regular multigraphs.
\newblock {\em Journal of the ACM}, 52(4):602--626, 2005.

\bibitem[Llo78]{ll78-degree-theory}
N.~G. Lloyd.
\newblock {\em Degree Theory}.
\newblock Cambridge University Press, Cambridge, England, 1978.

\bibitem[Man09]{man09}
B.~Manthey.
\newblock On approximating multi-criteria {TSP}.
\newblock In {\em Proceedings of 26th Annual Symposium on Theoretical Aspects
  of Computer Science}, volume 09001 of {\em Dagstuhl Seminar Proceedings},
  pages 637--648. Internationales Begegnungs- und Forschungszentrum fuer
  Informatik (IBFI), Schloss Dagstuhl, Germany, 2009.

\bibitem[PY91]{PY91}
C.~H. Papadimitriou and M.~Yannakakis.
\newblock Optimization, approximation, and complexity classes.
\newblock {\em Journal of Computer System Sciences}, 43(3), 1991.

\bibitem[PY00]{PY00}
C.~H. Papadimitriou and M.~Yannakakis.
\newblock On the approximability of trade-offs and optimal access of web
  sources.
\newblock In {\em FOCS '00: Proceedings of the 41st Annual Symposium on
  Foundations of Computer Science}, pages 86--95, Washington, DC, USA, 2000.
  IEEE Computer Society.

\bibitem[SBLL09]{SBLL09}
R.~Santana, C.~Bielza, J.~A. Lozano, and P.~Larra{\~{n}}aga.
\newblock Mining probabilistic models learned by {EDA}s in the optimization of
  multi-objective problems.
\newblock In {\em GECCO '09: Proceedings of the 11th Annual conference on
  Genetic and evolutionary computation}, pages 445--452, New York, NY, USA,
  2009. ACM.

\bibitem[Yan94]{Yan94}
M.~Yannakakis.
\newblock On the approximation of maximum satisfiability.
\newblock {\em Journal of Algorithms}, 17(3):475--502, 1994.

\end{thebibliography}

\end{document}